\def\BibTeX{{\rm B\kern-.05em{\sc i\kern-.025em b}\kern-.08em
    T\kern-.1667em\lower.7ex\hbox{E}\kern-.125emX}}
\newenvironment{ilp}[1]
 {%
  \addtocounter{equation}{-1}%
  \crefalias{subequations}{ilp}
  \begin{subequations}%
  \def\@currentlabel{#1}%
 }
 {\end{subequations}}
\Crefname{problem}{Problem}{Problems}
\def\BibTeX{{\rm B\kern-.05em{\sc i\kern-.025em b}\kern-.08em
    T\kern-.1667em\lower.7ex\hbox{E}\kern-.125emX}}
\newtheorem{theorem}{Theorem}
\newtheorem{lemma}{Lemma}
\theoremstyle{remark}
\newtheorem*{proofsketch}{Proof Sketch}
\DeclareMathOperator*{\argmax}{arg\,max}
\DeclareMathOperator*{\argmin}{arg\,min}
\DeclarePairedDelimiter\ceil{\lceil}{\rceil}
\DeclarePairedDelimiter\floor{\lfloor}{\rfloor}
\let\union\cup
\renewcommand{\epsilon}{\varepsilon}
\newcommand{\ecap}{\ensuremath{\textit{cap}}}
\newcommand{\ccap}{\ensuremath{\textit{ccap}}}
\newcommand{\cost}{\ensuremath{\textit{cost}}}
\newcommand{\mcfalpha}{\varrho}
\newcommand{\ssep}{\mid}
\newcommand{\bigO}{\ensuremath{\mathcal{O}}}
\newcommand{\Q}{\ensuremath{\mathbb{Q}}}
\newcommand{\N}{\ensuremath{\mathbb{N}}}
\newcommand{\Npos}{\ensuremath{\mathbb{N}_{\geq1}}}
\newcommand{\be}{\ensuremath{\coloneqq}}
\newcommand{\transpose}[1]{\ensuremath{#1^{\top}}}
\newcommand{\flow}{\ensuremath{f}}
\newcommand{\flowsum}{\ensuremath{\mathbf{f}}}
\newcommand{\flowset}{\ensuremath{\mathcal{F}}}
\newcommand{\demand}{\ensuremath{T}}
\newcommand{\lpopt}{\ensuremath{x^*}}
\newcommand{\algval}{\ensuremath{z}}
\newcommand{\lpval}{\ensuremath{z^*}}
\newcommand{\ilpval}{\ensuremath{z^*_\mathit{int}}}
\newcommand{\paredges}[1]{\ensuremath{\lambda_{#1}}}
\newcommand{\meanpar}{\ensuremath{\paredges{\text{avg}}}}
\newcommand{\minpar}{\ensuremath{\paredges{\text{min}}}}
\newcommand{\bidir}[1]{\ensuremath{\{#1\}}}
\newcommand{\edges}{\ensuremath{E(A)}}
\newcommand{\idx}{\ensuremath{\mathit{idx}}}
\newcommand{\alldemand}{\ensuremath{\mathbb{T}_A}}
\newcommand{\prob}[1]{\ac{#1}}
\newcommand*\Let[2]{\State #1 $\gets$ #2}
\algnewcommand\Break{\State \textbf{break}}
\algnewcommand\Continue{\State \textbf{continue}}
\algrenewcommand\Return{\State \algorithmicreturn{} }
\algrenewcommand\algorithmicforall{\textbf{for each}}
\algnewcommand\algorithmicinput{\textbf{Input:}}
\algnewcommand\algorithmicoutput{\textbf{Output:}}
\algnewcommand\Input{\item[\algorithmicinput]}%
\algnewcommand\Output{\item[\algorithmicoutput]}%
\algrenewcommand\algorithmicthen{}
\algrenewcommand\algorithmicdo{}
\tikzset{main_edge/.style={line width=2.3pt}}
\tikzset{dot/.style={circle,fill=black,inner sep=1pt,minimum size=1pt}}
\tikzset{path/.style={
    line join=round,
    decorate, decoration={
        zigzag,
        segment length=6,
        amplitude=.7,
        post=lineto,
        pre length=3pt,
        post length=3pt
    }, very thick,
    shorten <= 3pt,
    shorten >= 3pt
}}
\tikzset{edge/.style={
    rounded corners, -stealth, draw=black, very thick, shorten <= 2pt,shorten >= 2pt}}
\tikzset{biedge/.style={
    rounded corners, draw=black, shorten <= 2pt,shorten >= 2pt}}
\newcommand\copyrighttext{%
  \footnotesize \textcopyright 2026 IEEE. Personal use of this material is permitted.
  Permission from IEEE must be obtained for all other uses, in any current or future
  media, including reprinting/republishing this material for advertising or promotional
  purposes, creating new collective works, for resale or redistribution to servers or
  lists, or reuse of any copyrighted component of this work in other works.}
\newcommand\copyrightnotice{%
\begin{tikzpicture}[remember picture,overlay]
\node[anchor=south,yshift=10pt] at (current page.south)
  {\fbox{\parbox{\dimexpr\textwidth-\fboxsep-\fboxrule\relax}{\copyrighttext}}};
\end{tikzpicture}%
}
\newsavebox{\jamBox}
\newlength{\jamWidth}
\newcommand{\jamIfToBig}[2]{%
    \savebox{\jamBox}{#2}%
    \settowidth{\jamWidth}{\usebox{\jamBox}}%
    \ifthenelse{\jamWidth < #1}%
        {\usebox{\jamBox}}%
        {\resizebox{#1}{!}{\usebox{\jamBox}}%
    }%
}
\def\ltx@label#1{\cref@label{#1}}%
\def\label@in@display@noarg#1{\cref@old@label@in@display{#1}}%
\def\label@in@mmeasure@noarg#1{%
\begingroup%
    \measuring@false%
    \cref@old@label@in@display{#1}%
\endgroup}%
\begin{document}

\begin{acronym}
\acro{DAG}{directed acyclic graph}
\acro{DEFO}{Declarative and Expressive Forwarding Optimizer}
\acro{ECMP}{Equal Cost Multipath}
\acro{IGP}{Interior Gateway Protocol}
\acro{IE}{Ingress-Egress}
\acro{ISP}{Internet Service Provider}
\acro{IS-IS}{Intermediate System to Intermediate System}
\acro{LDP}{Label Distribution Protocol}
\acro{LER}{Label Edge Router}
\acro{LP}{linear program}
\acro{LSR}{Label Switched Router}
\acro{LSP}{Label Switched Path}
\acro{MCF}{Multi-Commodity Flow}
\acro{MLU}{Maximum Link Utilization}
\acro{MO}{Midpoint Optimization}
\acro{MPLS}{Multiprotocol Label Switching}
\acro{MSD}{Maximum Segment Depth}
\acro{NDA}{non-disclosure agreement}
\acro{OSPF}{Open Shortest Path First}
\acro{PoP}{Point of Presence}
\acro{RSVP}{Resource Reservation Protocol}
\acro{RL2TLE}{Router-Level 2TLE}
\acro{SDN}{Software-Defined Networking}
\acro{SC2SR}{Shortcut 2SR}
\acro{SID}{Segment Identifier}
\acro{SPR}{Shortest Path Routing}
\acro{SR}{Segment Routing}
\acro{SRLS}{Segment Routing Local Search}
\acro{TE}{Traffic Engineering}
\acro{TLE}{Tunnel Limit Extension}
\acro{WAE}{WAN Automation Engine}
\acro{w2TLE}{Weighted 2TLE}
\acro{ILP}{integer linear program}
\acro{RSP}{Route Switch Processor}
\acro{NPU}{Network Processing Unit}
\acro{TOCA}{\textsc{Traffic-Oblivious Connection Activation}}
\acro{TAS}{\textsc{Traffic-Aware Subnetwork}}
\acro{MMCFS}{\textsc{Minimum Multi-Commodity Flow Subgraph}}
\acro{TSP}{\textsc{Travelling Salesperson Problem}}
\acro{MWS3}{\textsc{Min-Weight-Sat(3)}}
\acro{DHC}{\textsc{Directed Hamilton Cycle}}
\end{acronym}

\title{No Traffic to Cry: Traffic-Oblivious\\ Link Deactivation for Green Traffic Engineering
\thanks{Funded by the German Research Foundation (DFG)
grant 461207633 (projects AS 341/7-2, CH 897/7-2).}%
}

\author{\IEEEauthorblockN{%
Max Ilsen\IEEEauthorrefmark{1},
Daniel Otten\IEEEauthorrefmark{1},
Nils Aschenbruck\IEEEauthorrefmark{1},
Markus Chimani\IEEEauthorrefmark{1}}
\IEEEauthorblockA{%
\IEEEauthorrefmark{1}Osnabrück University, Institute of Computer Science, Osnabrück, Germany\\
Email: \{max.ilsen, daotten, aschenbruck, markus.chimani\}@uos.de}}
\maketitle

\copyrightnotice
\begin{abstract}
As internet traffic grows, the underlying infrastructure consumes increasing amounts of energy. During off-peak hours, large parts of the networks remain underutilized, presenting significant potential for energy savings. Existing Green Traffic Engineering approaches attempt to leverage this potential by switching off those parts of the networks that are not required for the routing of specific traffic matrices. When traffic changes, the approaches need to adapt rapidly, which is hard to achieve given the complexity of the problem.
We take a fundamentally different approach: instead of considering a specific traffic matrix, we rely on a traffic-oblivious routing scheme. We discuss the NP-hard problem of activating as few connections as possible while still guaranteeing that \emph{any} down-scaled traffic matrix~\( \mcfalpha\cdot\demand\) can be routed, where~\(\mcfalpha \in (0,1)\) and~\(\demand\) is any traffic matrix routable in the original network. We present a \(\max(\frac{1}{\mcfalpha\cdot\minpar},2)\)-approximation algorithm for this problem, with \(\minpar\) denoting the minimum number of connections between any two connected routers. Additionally, we propose two post-processing heuristics to further improve solution quality.
Our evaluation shows that we can quickly generate near-optimal solutions.
By design, our method avoids the need for frequent reconfigurations and offers a promising direction to achieve practical energy savings in backbone networks.
\end{abstract}

\begin{IEEEkeywords}
green networking, traffic engineering, traffic-oblivious routing, segment routing, multi-commodity flow, network design
\end{IEEEkeywords}

\section{Introduction}
Constantly growing internet traffic leads to growing \ac{ISP} infrastructure and energy consumption. This does not only increase operational costs but also leads to massive $CO_2$ emissions. For instance, Telefonica reported that in 2023 they consumed 41 MWh per petabyte of traffic and emitted 337\,119 tons of  $CO_2$ \cite{telefonica}.
A major part of this energy is used to maintain a network infrastructure that
can handle the maximum daily amount of traffic.
However, as shown by \Cref{FIG:2022-traffic}, for nearly a third of the day, the amount of traffic is below 50\% of
this maximum~\cite{DBLP:conf/lcn/SchullerACHS17,The_internet}, with the average link
utilization even dropping below 20\%~\cite{DBLP:conf/infocom/HassidimRSS13}.

\begin{figure}
		\centering
		\includegraphics[width=.83\linewidth]{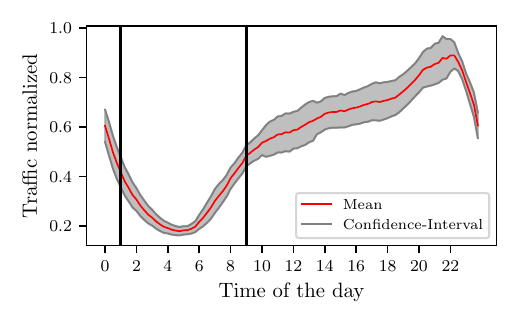}
		\caption{Traffic on an average day in 2022}
		\label{FIG:2022-traffic}
\end{figure}

In spite of the huge potential for energy savings, no Green \ac{TE} approach has yet made a practical impact. Existing work in this field generally follows a similar pattern: given a specific network topology and one or more traffic matrices, the algorithms remove as many components as possible while still ensuring that the remaining network can handle the traffic.
This pattern faces two key limitations: First, the algorithms need to react quickly to changes in traffic or other unforeseen events. However, many \ac{TE} problems are known to be NP-hard, making fast reconfiguration difficult or impossible in practice~\cite{INFOCOM:Approximation_Algs,DBLP:conf/cp/HartertSVB15}.
Second, even if a new configuration is computed in time, the router hardware and software must be capable of implementing these changes quickly and reliably. This is far from trivial, especially in large-scale backbone networks with strict reliability requirements.

Together, these two constraints limit the applicability of existing solutions. To overcome them, a fundamentally different approach is needed: one that can guarantee that the remaining topology is able to carry a wide set of traffic variations.
In this paper, we present such an approach and support it with a strong theoretical foundation. Our main contributions are:
\begin{compactitem} %
    \item We define the NP-hard problem of activating as few connections
    as possible while still allowing for a multi-commodity flow routing of
    \emph{any} down-scaled traffic matrix~$\mcfalpha\cdot\demand$ with~$\mcfalpha\in (0,1)$, where~$\demand$
    is only restricted in so far that is must be routable in the network~$G$ when all connections are active.
    \item  We develop a~$\max(\frac{1}{\mcfalpha\cdot\minpar},
    2)$-approximation algorithm with $\minpar$ as the minimum number of connections
    between any two connected routers.
    \item We show that in practical scenarios, our algorithms generate
    solutions that are only slightly worse than the optimum,
    usually requiring a significantly lower running time and yielding a lower \acl{MLU}.
\end{compactitem}

\section{Related Work}\label{sec:relatedWork}
There is a rich selection of recent papers dealing with \ac{TE} in general. Most of them focus on improvements to already existing routing methods---for example, in \cite{MO_INFOCOM,INFOCOM:MulticasTE}, the respective authors build upon the more theoretical \acl{SR} concept presented in~\cite{DBLP:conf/infocom/BhatiaHKL15} and try to bridge the gap between research and practice.
Other authors derive general but mainly theoretical results on related routing problems \cite{INFOCOM:Optimal_Oblivious,INFOCOM:Designing_Optimal_Compact,INFOCOM:Efficient_Algorithm,INFOCOM:Approximation_Algs}. However, only few papers have been published in the field of Green \ac{TE} in recent years \cite{DBLP:conf/lcn/OttenICA23,OttenLCN24,CATE,Hypnos}; the vast majority of papers in this field is much older. Green \ac{TE} approaches mainly differ in two aspects:

\textbf{1) The switched-off hardware:} Almost all Green \ac{TE} approaches aim to reduce energy consumption by offloading and switching off certain hardware components. However, there are considerable differences depending on the power model used. Some authors consider the deactivation of entire routers~\cite{DBLP:conf/ifip6-3/AddisCCGS12,DBLP:conf/wowmom/AmaldiCGM11,DBLP:conf/icc/ChiaraviglioMN09,DBLP:journals/cn/JiaJGSW18}. Other authors try to minimize the number of active connections in the network, which results in fewer active ports overall~\cite{Identifying_energy_critical_paths, green_sr, GreyWolf,DBLP:conf/ict/GhumanN17, DBLP:journals/cn/JiaJGSW18, DBLP:journals/ijcomsys/OkonorWGS17, DBLP:conf/icnp/ZhangYLZ10,Hypnos}. Still others propose approaches that lie somewhere in between: they aim to deactivate bundles of connections such that specific hardware components of routers (e.g.\ linecards), rather than the entire router, can be turned off \cite{DBLP:conf/lcn/OttenICA23,OttenLCN24}. Nevertheless, what all approaches have in common is that they downscale the network; they all require the removal of connections. Hence, basic research must consider this as an objective function. Starting from this, solutions can easily be adapted to specific hardware requirements by summarizing connections or using a weight function.

\textbf{2) \ac{TE} technology:}
One can use different \ac{TE} technologies to steer traffic away from the hardware components that should be switched off. Especially older approaches often still rely on \ac{IGP} metric tuning \cite{DBLP:conf/icccnt/OkonorGA16,OSPF_enhancement,GOSPF,An_Energy_Saving_Routing_Algorithm}. While this is a long-standing and well-understood \ac{TE} concept, it has various limitations (e.g., its inability to precisely control the paths of individual demands). Thus, metric tuning has been replaced by newer and more sophisticated \ac{TE} technologies. One of these, which is frequently used in the Green \ac{TE} context as well, is \ac{MPLS} with \ac{RSVP}-\ac{TE}, see \cite{Identifying_energy_critical_paths, green_sr, GreyWolf, DBLP:conf/ict/GhumanN17, DBLP:journals/cn/JiaJGSW18, DBLP:journals/ijcomsys/OkonorWGS17, DBLP:conf/icnp/ZhangYLZ10, DBLP:journals/jocnet/SankaranS14,DBLP:conf/icc/ChiaraviglioMN09}. Some authors, for example \cite{DBLP:conf/icnp/ZhangYLZ10}, do not explicitly name a \ac{TE} technology, but their approaches would require such a tool if implemented. However, with networks growing more and more in size, the rather limited scalability of \ac{MPLS} is becoming an increasing issue. Thus, it is (again) being replaced by more flexible recent technologies like \acl{SR}, as in \cite{DBLP:conf/lcn/OttenICA23,OttenLCN24}.

In summary, while existing Green \ac{TE} approaches explore different technologies and power models, they all follow the same basic pattern of disabling hardware components
to save energy.
However, as discussed in the introduction,
such approaches face fundamental challenges %
in terms of computational complexity and practical deployability when faced with variations in traffic.
We address this issue by building upon theoretical results concerning network design for \emph{traffic-oblivious} \ac{MCF} routing.

One field of related work in this direction is that of \emph{flow sparsifiers},
which are graphs that preserve the congestion of an input graph up to a certain
factor~\cite{DBLP:conf/focs/Moitra09,DBLP:conf/stoc/LeightonM10,DBLP:conf/soda/AndoniGK14}---however,
flow sparsifiers may include entirely new edges and thus cannot be used to model
the simple deactivation of connections.
More applicable to our use case is work on \emph{robust network design}~\cite{ben2005routing,DBLP:journals/jcss/AzarCFKR04,DBLP:journals/talg/HajiaghayiKRL07,DBLP:conf/stacs/Al-NajjarBL22}, %
where one usually minimizes the cost of reserving
capacity on the edges while establishing the routability of specific traffic matrices that are given in the form of a polytope.
We focus on a recent publication~\cite{DBLP:conf/isaac/ChimaniI25} as a starting point: it presents an
approximation algorithm for finding a directed subgraph with a minimum number of arcs
that allows for an \ac{MCF} routing of any traffic routable in the original
network when scaled down by a certain factor.

\section{Model and Problem Definition}\label{sec:model}
\subsection{Graph-based Model}\label{sec:graph_model}

For our theoretical contributions, we will consider general
directed graphs (or \emph{digraphs})~$G=(V,A)$ with positive arc capacities~$\ecap$
to showcase the broad applicability of our results.
However, when discussing real networks, we will always focus on the special case of
bidirected graphs where for every arc~$uv \in A$, there exists an opposing
arc~$vu \in A$ with the same capacity~$\ecap(vu) = \ecap(uv)$.
Such bidirected graphs form the basis of our network model, with the
nodes~$V$ representing the routers and each arc~$uv \in A$ representing the
the total data transmission capability between two routers \emph{in one
direction}: from~$u$ to~$v$.

Note that in real-world networks a link between two routers is comprised of a bundle of bidirected connections,
and the two directions belonging to the same connection cannot be turned off
separately.
Thus, given a bidirected graph~$G=(V,A)$ with unidirectional \emph{arcs}, we will also define a set of bidirected \emph{edges}~$\edges \be
\{\bidir{u,v} \ssep uv \in A\}$ with the capacity~$\ecap(\bidir{u,v}) \be \ecap(uv)$ for each edge~$\bidir{u,v}\in \edges$.
Further, each edge~$e\in\edges$ represents a bundle of $\paredges{e} \in \Npos$~many
bidirected connections that can be turned on or off.
We let~$\ccap(e)\be\frac{\ecap(e)}{\paredges{e}}$ denote the capacity of a single bidirected connection in the bundle
represented by~$e$.

Our results can easily be extended to multi-graphs with multiple parallel
arcs that share the same two nodes as source and target respectively, but have different
capacities.
To simplify our notation, we do not consider multi-graphs for the remainder of this paper.

\subsection{Routing Strategies}\label{sec:routing}
The traffic in the network is given by a \emph{traffic matrix}~$\demand$:
for every ordered pair of nodes~$(s,t)$ in the network, $\demand(s, t)$ specifies the
corresponding \emph{demand}, i.e., how many units of traffic have to be sent from~$s$
to~$t$.
We may refer to such a pair of nodes as a \emph{commodity}.
Given a digraph~$G=(V,A)$ with capacities~$\ecap$ and a traffic
matrix~$\demand$, a \emph{flow}~$\flow_{s,t} \colon A \to \Q$ from a node~$s$ to a
node~$t$ is a function that satisfies the flow conservation constraints:
for all nodes~$v \in V$,
\begin{align*}
    \sum_{uv \in A} \flow_{s,t}(uv) - \sum_{vu \in A} \flow_{s,t}(vu) &\; = \;
            \begin{cases*}
                -\demand(s,t) & if $v=s$,\\
                \demand(s,t) & if $v=t$,\\
                0 & else.
            \end{cases*}
\end{align*}
A \emph{routing strategy}~$X$ then is a set of flows~$\{\flow_{s,t} \ssep (s,t)\in
V^2\}$ that satisfies the capacity constraints: for all arcs~$uv \in A$,
\begin{align*}
    \sum_{(s,t) \in V^2} \flow_{s,t}(uv) & \leq \ecap(uv).
\end{align*}
We call a traffic matrix~$\demand$ $X$-routable in a capacitated
graph~$(G,\ecap)$ if there exists a routing strategy~$X$ for~$\demand$
in~$(G,\ecap)$, and consider three types of routing strategies:

\begin{asparaitem}
    \item\textbf{\acf{SPR}}:
        Each edge is assigned a weight (according to a
        \emph{routing metric}), and each demand is routed along its shortest
        path w.r.t.\ to these weights.
        In case of non-unique shortest paths we use \ac{ECMP}-routing,
        meaning that whenever the shortest paths for a commodity diverge, the
        corresponding flow is split equally among these diverging paths.
        In practice, \ac{SPR} is realized using the \ac{IGP} \ac{OSPF}.
    \item\textbf{$k$-\acf{SR}}:
        Each commodity~$(s,t)$ is assigned $k-1$~midpoints~$v_1,\dots,v_{k-1}$
        and routed via~\ac{SPR} from~$s$ to~$v_1$, from each midpoint to the
        subsequent midpoint and finally from~$v_{k-1}$ to~$t$.
        The number of these routing \emph{segments}~$k$ is usually constrained
        by hardware.
        We only consider \emph{node segments} (rather than adjacency segments)
        as this is common in the Green \ac{TE} literature and allows for a simpler
        optimization model.
        This means that an \ac{SR}~path only consists of nodes~$v_i \in V$
        and can be realized via a stack of router SSIDs that is added to the
        packet at the ingress node.
    \item\textbf{\acf{MCF}}: Apart from the flow conservation and capacity
        constraints, the routing is unconstrained.
        In general, routing via arbitrary paths is not practically viable due to
        hardware constraints.
        Thus, \ac{MCF} is typically mainly considered to establish lower bounds on routability and link utilization.
        Interestingly, in real-world scenarios $2$-\ac{SR} attains solutions very close to \ac{MCF}~\cite{DBLP:conf/infocom/BhatiaHKL15}.
\end{asparaitem}

For each routing strategy named above, there exists a more restricted
\emph{unsplittable} variant, which only allows choosing a single path for every
commodity.

\subsection{Complexity Considerations}\label{sec:complexity}

Choosing a routing strategy to model the routing in backbone networks requires
careful consideration.
Note that regardless of the routing strategy, finding a subnetwork of
minimum size that supports this routing is typically NP-hard---even in a
traffic-aware scenario where a specific traffic matrix is already given.
In many cases, finding such a subnetwork is even NP-hard to approximate:
Consider the problem where, given a digraph $G = (V,A)$ with both capacities and
costs on its arcs as well as a traffic matrix~$\demand$, the task is to find a
minimum-cost subdigraph~$G'$ of~$G$ such that there exists an unsplittable
\ac{SPR} satisfying all capacity constraints and demands given
by~$\demand$.
Bley~\cite[Corollary 4.7]{DBLP:conf/ipco/Bley05} shows that this is NP-hard to
approximate within a factor of~$2^{N^{(1-\varepsilon)}}$ where~$N$ is the
encoding size of the instance and $\varepsilon > 0$ is arbitrarily small.
While the proof focuses on unsplittable shortest paths, it entails that the same
holds w.r.t.\ any arbitrary unsplittable routing scheme (including unsplittable
$k$-\ac{SR} and unsplittable \ac{MCF}).
We can also extend the proof to allow any constant number~$\ell$ of flow splits
by copying the digraph constructed in the proof~$\ell$ times and adding
super-source and super-sink nodes connected to all original source and sink
nodes, respectively.
Lastly, we give a similar result for the problem variant
where one asks for a subgraph with a minimum number of arcs
(rather than one with minimum cost), which we will call \prob{TAS}:

\begin{theorem}\label{th:tas_inapproximability}
    It is NP-hard to approximate \prob{TAS} within any polynomial factor.
\end{theorem}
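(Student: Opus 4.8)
The plan is to show, for every constant $c \ge 1$, that it is NP-hard to distinguish \prob{TAS} instances with optimum at most a fixed polynomial $p(n)$ in the number of nodes $n$ from those with optimum exceeding $n^c \cdot p(n)$; since $c$ is arbitrary and the reduction is polynomial for each fixed $c$, no polynomial-factor approximation can exist. As every \prob{TAS} solution is a subgraph of $G$, the optimum never exceeds $|A|$, so a polynomial gap is the strongest attainable and matches the statement.

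I would build directly on the unsplittable-\ac{SPR} construction behind Bley's inapproximability result~\cite{DBLP:conf/ipco/Bley05}. From an instance $x$ of his NP-complete source problem one obtains a capacitated digraph with a demand such that a capacity-respecting unsplittable \ac{SPR} routing exists along a designated set of ``cheap'' arcs exactly when $x$ is a yes-instance, and is otherwise forced onto a penalty structure. The key modification for the min-arc objective is to realize this penalty not as a single high-cost arc but as an explicit \emph{detour}: a long directed path obtained by subdividing the penalty arc into $L$ unit-capacity arcs. I would additionally provide the detour (one private copy per commodity, with ample capacity) as a guaranteed fallback, so that the full graph always admits a routing and \prob{TAS} is feasible with a well-defined optimum.

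The gap then follows the standard pattern. In the yes-case, the cheap \ac{SPR} routing keeps every commodity inside the core gadget, whose arc count is bounded by a fixed polynomial $p(n)$ independent of $L$; the detours stay unused, so the optimum is at most $p(n)$. In the no-case, no capacity-respecting routing avoids the penalty, so every feasible subgraph must route some commodity along a detour and hence contain at least its $L$ arcs, giving optimum at least $L$. Setting $L \be n^c \cdot p(n)$ produces a gap of $n^c$. Finally, by the transfer observation made above---that hardness under unsplittable \ac{SPR} entails the same under every unsplittable scheme---the result holds regardless of which unsplittable routing model is used to define \prob{TAS}.

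The main obstacle is the interaction between the subdivision and the shortest-path semantics of \ac{SPR}. Lengthening the penalty into an $L$-arc path changes path lengths in the induced metric, and I must choose the arc weights so that this neither makes a cheap arc non-shortest in a yes-instance (which would destroy completeness) nor opens a short capacity-respecting route in a no-instance (which would destroy soundness). Controlling these shortest-path lengths through the enlarged gadget, while simultaneously guaranteeing that the fallback detours cannot be exploited to undercut the bound $p(n)$ in the yes-case, is the crux of the argument; the rest is routine.
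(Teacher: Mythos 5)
Your proposal is correct in its essentials but takes a genuinely different route from the paper. The paper builds a chain of gap-\emph{preserving} reductions: from \prob{DHC} it gets polynomial-factor inapproximability of \prob{TSP} with polynomially bounded lengths, transfers this via the reduction of \cite{1990OrponenApprox} to \prob{MWS3} with polynomially bounded variable costs, and only then maps \prob{MWS3} onto Bley's construction \cite{DBLP:conf/ipco/Bley05}, converting each cost~$c$ into a path of $c\cdot R\log R$ arcs (after subdividing arcs shared by several clause gadgets) so that the \emph{inherited} gap survives the passage from a weighted to a unit-cost objective. You instead manufacture the gap directly: starting from the underlying NP-complete decision problem, you pad the construction with length-$L$ detours, $L = n^c\cdot p(n)$, which any feasible routing is forced to use exactly in the no-case, giving the dichotomy $\mathrm{OPT}\le p(n)$ versus $\mathrm{OPT}\ge L$ for every constant~$c$. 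Your route is simpler---it needs none of the \prob{TSP}/\prob{MWS3} machinery, and it sidesteps the cost accounting (the $R\log R$ scaling and the shared-arc subdivision exist precisely to keep the transferred gap intact, which your yes-case avoids by using zero padded arcs)---whereas the paper's route reuses Bley's reduction as a black-box approximation-preserving map, inheriting its correctness instead of re-deriving the decision structure; quantitatively both yield the same statement, a reduction per target polynomial, with the gap necessarily below instance size since $\mathrm{OPT}\le|A|$. One caveat: the difficulty you single out as the crux---tuning \ac{SPR} weights so the subdivision breaks neither completeness nor soundness---is largely a phantom. The weights are part of the solution and live only on the chosen subgraph, so in the yes-case the excluded detours impose no constraint on them; and the no-case needs no \ac{SPR}-specific reasoning at all, because (as the paper itself observes) Bley's soundness holds for \emph{every} unsplittable routing scheme, of which \ac{SPR} is one. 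The actual proof obligation you should discharge is that Bley's core construction equates satisfiability of the source instance with core-feasibility under \emph{arbitrary} unsplittable routings---the same property the paper's sketch also imports from \cite{DBLP:conf/ipco/Bley05}.
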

\begin{proofsketch} %
    Reducing from \acl{DHC}, it is well-known that the \prob{TSP} with polynomially bounded arc lengths is NP-hard
    to approximate within any polynomial factor.
    Using \prob{TSP} as the original problem in the reduction from \cite[Theorem~3.1]{1990OrponenApprox}
    then yields the same for \prob{MWS3} with polynomially bounded variable costs.
    We can then give a reduction from \prob{MWS3} to \prob{TAS}
    by building on top of the construction from~\cite[Corollary~4.7]{DBLP:conf/ipco/Bley05}:
    Subdivide those arcs that are shared by multiple clause gadgets, and assign each clause to a separate subdivision arc.
    Then, each arc with a positive
    cost~$c$ is replaced by a path of length $c\cdot R\log R$ where~$R$ is the number
    of those remaining arcs that are necessarily contained in any feasible
    solution.
    As $c$ is polynomially bounded, the reduction still runs in polynomial time. \hfill\qedsymbol
\end{proofsketch}

Moreover, we highlight the difficulty of evaluating the quality of an
already computed network w.r.t.\ \ac{SPR} and $k$-\ac{SR}:
Given a digraph and a traffic matrix, it is NP-hard to choose arc weights that
minimize the \ac{MLU} of the corresponding \ac{SPR} within a factor of
1.5~\cite{DBLP:journals/coap/FortzT04}.
Already for 2-\ac{SR}, the problem remains hard even when the arc weights are
\emph{fixed}: given a digraph with (possibly even unit) arc weights and a traffic matrix, it is NP-hard
to find a corresponding 2-\ac{SR} that respects the capacity constraints~\cite{DBLP:conf/cp/HartertSVB15}.
This makes it difficult to evaluate the quality of a given subgraph~$G'$
since just calculating the \ac{MLU} obtainable via 2-\ac{SR} in~$G'$ would
already require solving an NP-hard problem.
In comparison, calculating an \ac{MCF} for a given traffic matrix in a given
network is optimally solvable in polynomial time using linear programs and yields routing paths that,
on real-world instances, can equivalently be realized via
2-\ac{SR}~\cite{DBLP:conf/infocom/BhatiaHKL15}.
Thus, we will use \ac{MCF} as our algorithmic quality evaluation oracle when minimizing the subnetwork, and investigate the practical soundness of this choice in our experiments.

\subsection{Problem Definition}\label{subsec:Problem}

We define the \prob{TOCA} problem:
As we consider a traffic-oblivious scenario, we are only given a network
topology in the form of a bidirected graph~$G=(V,A)$ with arc
capacities~$\ecap \colon A \to \Q$ and $\paredges{e}$~connections for each bidirected edge~$e \in \edges$, as well as
a retention ratio~$\mcfalpha\in(0,1)$ that represents how much the
traffic scales down during low-traffic periods.
While it may not scale down evenly across the network, the traffic is at least upper bounded by such an even down-scaling.
A solution for \prob{TOCA} is a function
that maps each edge~$e \in \edges$ to a number of connections~$x_e \in
\{0,\dots,\paredges{e}\}$ that should be active (the remaining $\paredges{e} -
x_e$ connections are to be turned off during low-traffic periods).
The solution is feasible if for every traffic matrix~$\demand$ that is
\ac{MCF}-routable in $(G,\ecap)$, the scaled-down traffic matrix
$\mcfalpha\cdot\demand$ is \ac{MCF}-routable in~$(G,\ecap')$  where
$\ecap'(uv)\be x_{\bidir{u,v}}\cdot\ccap(\bidir{u,v})$ for every arc~$uv\in A$.
That is, every such scaled-down traffic matrix~$T$ must still be
\ac{MCF}-routable when only the capacity of $x_e$~connections for each edge~$e \in \edges$
is available.
Our task is to find a feasible solution with the minimum number of
connections~$\sum_{e \in \edges} x_e$.
In line with \cite{DBLP:conf/isaac/ChimaniI25}, we call the variant of \prob{TOCA} where the input
graph is directed (not necessarily bidirected) with one connection per arc
\prob{MMCFS}.
This variant is already NP-hard~\cite{DBLP:conf/isaac/ChimaniI25}.

Interestingly, \cite[Theorem 3]{DBLP:conf/isaac/ChimaniI25} shows that for \prob{MMCFS}, instead of
considering all possible \ac{MCF}-routable traffic matrices, it suffices to
consider a single \enquote{worst-case} traffic matrix which defines
a demand of value~$\ecap(st)$ exactly for
each directed arc~$st$ in the digraph~$G=(V,A)$:
\begin{equation}\label{eq:super_t_arcs}
    \alldemand(s,t) \be
    \begin{cases*}
        \ecap(st) & if $st \in A$, \\
        0 & otherwise.
    \end{cases*}
\end{equation}

We generalize this reformulation to the case where an arc might not
only be turned on and off entirely but also be assigned an arbitrary reduced
capacity.
This generalization clearly includes the \ac{TOCA} setting with multiple
connections per bidirected edge where an arbitrary subset of the connections of an edge may be
active.
The reformulation is still applicable when all flows are unsplittable.

\begin{theorem}\label{th:super_t_unsplittable}
    Given a directed (possibly bidirected) graph~$G=(V,A)$ with arc capacities~$\ecap$, a retention
    ratio~$\mcfalpha\in(0,1)$, and a reduced capacity function~$\ecap'$
    with~$\ecap'(a) \leq \ecap(a)$ for all~$a \subseteq A$,
    the following statements are equivalent:
    \begin{itemize}
        \item For all traffic matrices~$T$ that are \ac{MCF}-routable
            in~$(A,\ecap)$, the scaled matrix~$\mcfalpha\cdot\demand$ is
            \ac{MCF}-routable in~$(A,\ecap')$.
        \item The scaled matrix~$\mcfalpha\cdot\alldemand$ is \ac{MCF}-routable
            in~$(A,\ecap')$.
    \end{itemize}
    This also holds when we restrict all mentions of \ac{MCF}-routability in the
    above statements to unsplittable \ac{MCF}.
\end{theorem}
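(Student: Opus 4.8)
The plan is to prove the two implications separately and to observe that essentially the same rerouting idea handles both the splittable and the unsplittable settings, with only the bookkeeping changing. The implication from the first statement to the second is immediate: the matrix $\alldemand$ is itself \ac{MCF}-routable in $(A,\ecap)$, since routing each commodity $(s,t)$ with $st\in A$ straight along its own arc $st$ sends exactly $\ecap(st)=\alldemand(s,t)$ units and loads arc $st$ to precisely $\ecap(st)$. As this routing uses a single arc per commodity, it is also unsplittable. Applying the first statement to the matrix $\alldemand$ therefore yields that $\mcfalpha\cdot\alldemand$ is (unsplittably) \ac{MCF}-routable in $(A,\ecap')$, establishing the second statement in both variants.

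For the converse, suppose $\mcfalpha\cdot\alldemand$ is \ac{MCF}-routable in $(A,\ecap')$ via flows $g_{s,t}$ (one per arc $st\in A$), so each $g_{s,t}$ sends $\mcfalpha\cdot\ecap(st)$ units from $s$ to $t$ and $\sum_{st\in A} g_{s,t}(a)\le\ecap'(a)$ for every arc $a$. Let $\demand$ be any matrix \ac{MCF}-routable in $(A,\ecap)$ via flows $f_{s',t'}$ with $\sum_{(s',t')} f_{s',t'}(st)\le\ecap(st)$ for every arc $st$. The idea is to reroute: whenever $f_{s',t'}$ pushes $f_{s',t'}(st)$ units across arc $st$, I would instead route $\mcfalpha\cdot f_{s',t'}(st)$ units along $g_{s,t}$, scaled to carry exactly that amount. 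Since all capacities are positive, this is captured by defining
\[
    h_{s',t'}(a) \be \sum_{st\in A} f_{s',t'}(st)\cdot\frac{g_{s,t}(a)}{\ecap(st)}.
\]
First I would verify flow conservation: at a node $v$ the imbalance of $h_{s',t'}$ is $\sum_{st}\frac{f_{s',t'}(st)}{\ecap(st)}$ times the imbalance of $g_{s,t}$ at $v$; since each $g_{s,t}$ is balanced except at $s$ and $t$, this collapses to $\mcfalpha$ times the imbalance of $f_{s',t'}$ at $v$, so $h_{s',t'}$ routes $\mcfalpha\cdot\demand(s',t')$ from $s'$ to $t'$. The capacity check is the crux and follows by swapping the order of summation:
\[
    \sum_{(s',t')} h_{s',t'}(a) = \sum_{st\in A}\frac{g_{s,t}(a)}{\ecap(st)}\sum_{(s',t')} f_{s',t'}(st) \le \sum_{st\in A} g_{s,t}(a) \le \ecap'(a),
\]
where the first inequality uses $\ecap$-feasibility of $f$ and the second uses $\ecap'$-feasibility of $g$. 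Hence $\mcfalpha\cdot\demand$ is \ac{MCF}-routable in $(A,\ecap')$.

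The main obstacle is the unsplittable case of the converse, because the combination defining $h$ generally destroys unsplittability. Here I would work directly with single paths: $g_{s,t}$ is a path $P_{s,t}$ carrying $\mcfalpha\cdot\ecap(st)$, and $f_{s',t'}$ is a path $Q_{s',t'}$ carrying $\demand(s',t')$. For each commodity $(s',t')$ I would concatenate, in order, the paths $P_{s,t}$ for the consecutive arcs $st$ of $Q_{s',t'}$ to form a walk from $s'$ to $t'$, then shortcut cycles to obtain a simple path $R_{s',t'}$, and route $\mcfalpha\cdot\demand(s',t')$ along it. Before shortcutting, the load on an arc $a$ is at most
\[
    \mcfalpha\sum_{st:\,a\in P_{s,t}}\ \sum_{(s',t'):\,st\in Q_{s',t'}}\demand(s',t') \ \le\ \mcfalpha\sum_{st:\,a\in P_{s,t}}\ecap(st),
\]
which is exactly the load of $a$ under the unsplittable routing of $\mcfalpha\cdot\alldemand$ and hence at most $\ecap'(a)$; since shortcutting only decreases arc loads, feasibility is preserved. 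This produces an unsplittable routing of $\mcfalpha\cdot\demand$ in $(A,\ecap')$ and completes the equivalence.
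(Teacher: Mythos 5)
Your proof is correct, and its core is the same rerouting idea as the paper's: replace the original flow crossing an arc~$st$ by a scaled copy of the routing that the reduced network provides for the commodity~$(s,t)$. Two differences are worth noting. First, the paper aggregates the original flow into an intermediate matrix~$\demand'$ with $\demand'(s,t) \be \flowsum^\demand(st)$ for $st \in A$, invokes monotonicity ($\demand' \leq \alldemand$ componentwise) to get routability of $\mcfalpha\cdot\demand'$, and normalizes by the aggregate flow $\flowsum^\demand(uv)$; you skip the intermediate matrix entirely and normalize by $\ecap(st)$, which works just as well because the capacity check only needs $\sum_{(s',t')} f_{s',t'}(st) \leq \ecap(st)$, and it spares you the case distinction on arcs with zero aggregate flow. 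Second, and more substantively, you treat the unsplittable case explicitly: concatenating the single paths $P_{s,t}$ along the arcs of $Q_{s',t'}$ yields a walk, and you correctly observe that cycles must be shortcut to obtain a genuine single path, with arc loads only decreasing. The paper applies its one formula to both the splittable and unsplittable settings and leaves this step implicit---applied to unsplittable inputs, that formula produces a walk flow in which an arc may be traversed several times, which is not literally a single-path flow until cycles are removed. Your version is therefore slightly more careful on exactly the point that the theorem's final sentence is about.
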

\begin{proof}
    $\alldemand$ is routable in $(A,\ecap)$ by definition.
    If every traffic matrix routable in $(A,\ecap)$ is also routable
    in~$(A,\ecap')$ when scaled down by~$\mcfalpha$, then so is
    $\mcfalpha\cdot\alldemand$.

    For the other direction, consider any arbitrary traffic matrix~$T$ routable
    in~$(A,\ecap)$.
    Let $\{\flow^{\demand}_{s,t} \ssep (s,t) \in V^2\}$ be the (possibly
    unsplittable) \ac{MCF} that routes~$\demand$ in~$(A,\ecap)$ with the vector
    $\flowsum^\demand \be \sum_{(s,t) \in V^2} \flow^\demand_{s,t}$ specifying
    the total flow over each edge.
    Using this \ac{MCF}, we can construct a new traffic matrix~$\demand'$ with
    $\demand'(s,t) \be \flowsum^\demand(st)$ if $st \in A$, and 0 otherwise.

    Using component-wise comparison, we have $\demand'\leq\alldemand$.
    Thus, since $\mcfalpha\cdot\alldemand$ is routable in $(A,\ecap')$, so is
    $\mcfalpha\cdot\demand'$.
    But if $\mcfalpha\cdot\demand'$ is routable in~$(A,\ecap')$ using the
    flows~$\{\flow^{\mcfalpha\cdot\demand'}_{u,v} \ssep (u,v) \in V^2\}$, then
    $\mcfalpha\cdot\demand$ is also routable in~$(A,\ecap')$ using the
    flows~$\{\flow^{\mcfalpha\cdot\demand}_{s,t} \ssep (s,t) \in V^2\}$
    constructed as follows:
    for each commodity~$(s,t) \in V^2$, and each arc~$uv\in A$, calculate
    the fraction of flow routed over $uv$ that is used by $\flow^\demand_{s,t}$,
    and route this fraction over the path chosen by
    $\flow^{\mcfalpha\cdot\demand'}_{u,v}$, i.e.,
    \begin{equation*}
        \flow^{\mcfalpha\cdot \demand}_{s,t}(e) \be
            \sum_{uv\in A \colon \flowsum^\demand(uv) > 0}
            \frac{\flow^\demand_{s,t}(uv)}{\flowsum^\demand(uv)}\cdot\flow^{\mcfalpha\cdot\demand'}_{u,v}(e). \qedhere
    \end{equation*}
\end{proof}

\section{Approximation via LP Rounding}\label{sec:approximation}

\subsection{ILP and LP Relaxation}

In \cite{DBLP:conf/isaac/ChimaniI25}, the authors present an ILP (Integer Linear Program) for \prob{MMCFS} and show that
rounding up a basic optimal solution for the LP relaxation of this ILP is a
$\max(\frac{1}{\mcfalpha}, 2)$-approximation for \prob{MMCFS}.
We adapt their approach to our practically motivated graph-based model outlined
in \Cref{sec:model}, while improving upon the aforementioned approximation
guarantee. Building upon their ILP, we obtain ILP~\eqref{eq:ilp_mcfs_undirected}.
\begin{figure*}
\begin{ilp}{1}\label{eq:ilp_mcfs_undirected}
\begin{align}
    \min &\sum_{e \in \edges}x_e & \label{eq:ilp_bidir_obj}\\
    \sum_{u\colon uv \in A} \flow_{s,t}(uv) - \sum_{u \colon vu \in A} \flow_{s,t}(vu) &=
            \begin{cases*}
                -\mcfalpha\alldemand(st) & if $v=s$\\
                \mcfalpha\alldemand(st) & if $v=t$\\
                0 & else
            \end{cases*}
        &\forall v\in V, st \in A \label{eq:ilp_bidir_flow_conservation}\\
    \sum_{st \in A} \flow_{s,t}(uv) &\leq x_{\bidir{u,v}} \cdot \ccap(\bidir{u,v})
                       &\forall uv\in A \label{eq:ilp_bidir_cap_constraint}\\
    \flow_{s,t}(uv) &= \flow_{t,s}(vu)
                        &\forall st\in A, uv\in A \label{eq:ilp_bidir_flow_symmetry}\\
    \flow_{s,t}(uv) &\geq 0 &\forall st \in A, uv \in A\\
    x_e & \in \{0,\dots,\paredges{e}\} &\forall e \in \edges \label{eq:ilp_bidir_integrality_constraint}
\end{align}
\end{ilp}
\caption*{ILP~\eqref{eq:ilp_mcfs_undirected}: ILP formulation for \prob{TOCA}}
\end{figure*}
Rather than one variable for each directed arc, we use one $x$-variable
for each bidirected edge in~$\edges$ (since the two directions of a connection cannot be
turned off separately).
Further, such a variable for an edge~$e$ is not binary but
integer valued, i.e.,~$x_e \in \{0, \dots, \paredges{e}\}$, which indicates how
many connections in the corresponding bundle are active.
However, we keep separate flow variables for each directed arc in order to
model the flow.
The optimization goal~\eqref{eq:ilp_bidir_obj} is to minimize the total number
of active connections.
The flow conservation constraints~\eqref{eq:ilp_bidir_flow_conservation} ensure
that the $\flow$-variables represent correct flows and that all
demands~$\alldemand$ are satisfied.
Moreover, the capacity constraints~\eqref{eq:ilp_bidir_cap_constraint} are
adapted such that the total flow over a directed arc~$uv$---which we may
denote by $\flowsum(uv) \be \sum_{st\in A} \flow_{s,t}(uv)$---does not
surpass the capacity of active connections from~$u$ to~$v$.
Lastly, we also add constraint~\eqref{eq:ilp_bidir_flow_symmetry}, which ensures that the flow
for commodity~$(s,t)\in V^2$ is routed over the same but reversed paths as that for~$(t,s)$.
This constraint may seem superfluous and possibly restricting at first glance. However, it plays
a crucial role to attain our approximation guarantee, to soundly capture the extension that we can now
only decide on edges instead of individual arcs.
This additional constraint also does not change the set of feasible solutions w.r.t.\ to the $x$-variables:

\begin{theorem}\label{th:bidir_flow_symmetry}
    Let~$G = (V,A)$ be a bidirected graph with arc
    capacities~$\ecap$ and~$\demand$ a traffic matrix.
    $\demand$ is routable in~$(G,\ecap)$ via an \ac{MCF}~$\flowset = \{\flow_{s,t} \ssep
    (s,t) \in V^2\}$ if and only if it is routable in~$(G,\ecap)$ via an
    \ac{MCF}~$\flowset' = \{\flow'_{s,t} \ssep (s,t) \in V^2\}$ that satisfies
    $\flow'_{s,t}(uv) = \flow'_{t,s}(vu)$ for all
    commodities~$(s,t) \in V^2$
    and arcs~$uv\in A$.
\end{theorem}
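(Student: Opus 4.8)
The plan is to prove the nontrivial direction by \emph{symmetrizing} an arbitrary routing; the converse is immediate, since a symmetric MCF~$\flowset'$ routing~$\demand$ is in particular an MCF routing~$\demand$, so taking $\flowset \be \flowset'$ already suffices. For the forward direction I would start from an arbitrary MCF $\flowset = \{\flow_{s,t} \ssep (s,t)\in V^2\}$ routing~$\demand$ in~$(G,\ecap)$ and build a new routing by reflecting and averaging opposing commodities:
\[
\flow'_{s,t}(uv) \be \tfrac12\bigl(\flow_{s,t}(uv) + \flow_{t,s}(vu)\bigr) \qquad \text{for all } (s,t)\in V^2,\ uv\in A.
\]
Bidirectedness of~$G$ guarantees $vu\in A$ whenever $uv\in A$, so this is well defined. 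The symmetry property is then immediate: swapping $(s,t)\leftrightarrow(t,s)$ and $uv\leftrightarrow vu$ maps the defining expression to itself, so $\flow'_{s,t}(uv)=\flow'_{t,s}(vu)$ holds for every commodity and arc by construction.

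Next I would check that each $\flow'_{s,t}$ is a valid flow. At any internal node $v\neq s,t$ both $\flow_{s,t}$ and $\flow_{t,s}$ satisfy conservation, so their reflected average does too and the net balance is~$0$. At the terminals I would compute the net balance of~$\flow'_{s,t}$ directly: it equals $\tfrac12$ times the balance of~$\flow_{s,t}$ plus $\tfrac12$ times the reversed balance of~$\flow_{t,s}$, which at~$s$ evaluates to $\tfrac12\demand(s,t)+\tfrac12\demand(t,s)$. This terminal computation is the crux of the whole argument and the step I expect to be the main obstacle: it produces the required net outflow $\demand(s,t)$ at~$s$ exactly when $\demand(s,t)=\demand(t,s)$. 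Indeed, a two-node example with $\demand(a,b)\neq\demand(b,a)$ shows that no symmetric routing can exist otherwise, because the symmetry constraint forces the net flows of $(s,t)$ and $(t,s)$ to be exact reverses of one another. Hence the argument goes through for \emph{symmetric} traffic matrices; in the intended application to the worst-case matrix~$\alldemand$ this is automatic, since $\alldemand(s,t)=\ecap(st)=\ecap(ts)=\alldemand(t,s)$ on a bidirected graph, and the balance collapses to $\demand(s,t)$ as needed.

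Finally I would verify the capacity constraints. Summing over commodities, the total flow placed by~$\flowset'$ on an arc~$uv$ is $\tfrac12\bigl(\sum_{s,t}\flow_{s,t}(uv)+\sum_{s,t}\flow_{s,t}(vu)\bigr)=\tfrac12\bigl(\flowsum(uv)+\flowsum(vu)\bigr)$, where the second sum arises by relabeling $(s,t)\mapsto(t,s)$ in the reflected terms. Since~$\flowset$ is feasible, $\flowsum(uv)\leq\ecap(uv)$ and $\flowsum(vu)\leq\ecap(vu)$, and bidirectedness gives $\ecap(uv)=\ecap(vu)$; hence the total is at most $\ecap(uv)$. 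Putting the three checks together, $\flowset'$ is a feasible symmetric MCF routing~$\demand$, which completes the nontrivial direction.
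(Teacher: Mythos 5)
Your construction is exactly the paper's: reflect the routing (send commodity $(s,t)$ along the reversed arcs of commodity $(t,s)$'s flow) and average the two. But your terminal-balance computation is not a side remark --- it exposes a genuine gap that the paper's own proof glosses over. The paper asserts that the reflected MCF $\bar{\flowset}$ with $\bar{\flow}_{s,t}(uv) \be \flow_{t,s}(vu)$ is ``clearly feasible since both directions of each arc have the same capacity,'' but that argument only covers the capacity constraints: conservation at the terminals gives $\bar{\flow}_{s,t}$ a net outflow of $\demand(t,s)$ at $s$, not $\demand(s,t)$, so $\bar{\flowset}$ routes the \emph{transposed} matrix, and the average $\flowset'$ routes $\frac{1}{2}(\demand(s,t)+\demand(t,s))$ per commodity. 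As you observe, this is not a defect of the particular construction but of the statement itself: the symmetry constraint together with bidirectedness forces the node balances of $\flow'_{s,t}$ to be the negatives of those of $\flow'_{t,s}$, hence forces $\demand(s,t)=\demand(t,s)$; your two-node instance with $\demand(a,b)=1$, $\demand(b,a)=0$ is routable but admits no symmetric routing, so the ``if and only if'' as stated is false for asymmetric matrices. The theorem should carry the hypothesis that $\demand$ is symmetric, under which your proof (and the paper's) goes through verbatim. This costs nothing downstream: the theorem is only ever applied to $\mcfalpha\cdot\alldemand$, which is symmetric on a bidirected graph since $\alldemand(s,t)=\ecap(st)=\ecap(ts)=\alldemand(t,s)$, so the justification of constraint~\eqref{eq:ilp_bidir_flow_symmetry}, the ILP size reduction, and the approximation guarantees are all unaffected.
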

\begin{proof}
    We only have to show that~$\flowset$ can be transformed into~$\flowset'$.
    To this end, consider the
    \ac{MCF}~$\bar{\flowset} =
    \{\bar{\flow}_{s,t} \ssep (s,t) \in V^2\}$ \enquote{inverse} to~$\flowset$ where we route each commodity~$(s,t)$ along the reversed arcs of the original flow's commodity~$(t,s)$, i.e., $\bar{\flow}_{s,t}(uv)
    \be \flow_{t,s}(vu)$. $\bar{\flowset}$ is clearly feasible since both directions of each arc
    have the same capacity.
    By averaging both of these \acp{MCF}, we construct
    $\flowset' =  \{\flow'_{s,t} \ssep (s,t) \in V^2\}$ that is not only feasible but also satisfies
    constraint~\eqref{eq:ilp_bidir_flow_symmetry}:
    \begin{align*}
        \flow'_{s,t}(uv) \be %
        \frac{\flow_{s,t}(uv) + \flow_{t,s}(vu)}{2} = \flow'_{t,s}(vu). & \qedhere
    \end{align*}
\end{proof}

\Cref{th:bidir_flow_symmetry} also allows us to drastically reduce the size of the
ILP when implementing it in practice:
Instead of enforcing the constraints~\eqref{eq:ilp_bidir_flow_symmetry}
explicitly, we can assign a unique integer index~$\idx(w)$ to each
node~$w\in V$ and only create the flow variables $\flow_{s,t}(uv)$, $(s,t)\in
V^2$, $uv \in A$ with $\idx(s) \leq \idx(t)$.
Then, we can replace all occurrences of $\flow_{s,t}(uv)$ with $\idx(s) >
\idx(t)$ in the constraints by~$\flow_{t,s}(vu)$ and remove all those
constraints that consequently become superfluous: this not only includes
the constraints~\eqref{eq:ilp_bidir_flow_symmetry}, but also all flow
conservation constraints~\eqref{eq:ilp_bidir_flow_conservation} for
commodities~$(s,t)$ with $st \in A$,  $\idx(s) > \idx(t)$, and capacity
constraints~\eqref{eq:ilp_bidir_cap_constraint} for arcs~$uv \in A$ with
$\idx(u) > \idx(v)$.
However, for the sake of clarity, below we will discuss
ILP~\eqref{eq:ilp_mcfs_undirected} as it is written.

The relaxation of ILP~\eqref{eq:ilp_mcfs_undirected} is obtained by replacing
the integrality constraints~\eqref{eq:ilp_bidir_integrality_constraint} on $x_e$
by the inequalities $0 \leq x_e \leq \paredges{e}$ for all $e \in \edges$.
We will call this LP relaxation \enquote{LP~\eqref{eq:ilp_mcfs_undirected}} for short.
The only other type of constraint that bounds the $x$-variables
is~\eqref{eq:ilp_bidir_cap_constraint}, which yields two lower bounds
$\nicefrac{\flowsum(uv)}{\ccap(\bidir{u,v})}$ and
$\nicefrac{\flowsum(vu)}{\ccap(\bidir{u,v})}$ for each $x_{\bidir{u,v}}$ with
$\bidir{u,v} \in \edges$.
Due to constraint \eqref{eq:ilp_bidir_flow_symmetry}, these lower bounds are equal:
\begin{align*}
    x_{\bidir{u,v}} \geq \frac{\flowsum(uv)}{\ccap(\bidir{u,v})} =
    \frac{\flowsum(vu)}{\ccap(\bidir{u,v})} & & \forall \bidir{u,v} \in \edges.
\end{align*}

Moreover, this bound is always attained exactly as the $x$-variables can
take on fractional values in the LP relaxation and their sum is minimized.
Hence, LP~\eqref{eq:ilp_mcfs_undirected} is equivalent to a classical
\ac{MCF}-LP where the sum of edge utilizations~$\sum_{\bidir{u,v} \in \edges} \frac{\flowsum(uv)}{\ccap(\bidir{u,v})}$ is
minimized.
We thus refer to $\cost(e) \be \frac{1}{\ccap(e)}$ as the \emph{cost} incurred
by sending one unit of flow over edge~$e$.

\subsection{Approximation for Non-uniform Capacities}
Given a bidirected graph $G=(V,A)$ with~$\paredges{e}$ connections per bidirected
edge~$e\in \edges$, let $\minpar \be \min_{e \in \edges}\paredges{e}$. We propose the following algorithm as a
$\max(\frac{1}{\mcfalpha\cdot\minpar},2)$-approximation for \prob{TOCA}:
compute a basic
optimal solution~$\lpopt$ for LP~\eqref{eq:ilp_mcfs_undirected} and round up its
values, i.e., choose $\ceil{\lpopt_e}$ active connections for each edge~$e\in \edges$ ($0$ is still rounded to $0$, of course).
We require $\lpopt$ to be \emph{basic}, i.e.,
it cannot be expressed as a convex combination of two or more other
feasible solutions \cite[p.\ 100]{DBLP:books/daglib/0004338}.
This is not a strong restriction as all standard LP solving algorithms
will always return a basic optimal solution (if any solution
exists)~\cite[p.\ 279]{DBLP:books/daglib/0030297}.

To establish the approximation ratio, we
roughly follow the proof of \cite[Lemma~12 \& Theorem~14]{DBLP:conf/isaac/ChimaniI25}, which proves a
similar guarantee for a closely related rounding scheme w.r.t.\ \prob{MMCFS}.
The main necessity for deviating from their proof is that we minimize the number of
connections per link (instead of binary decisions), and that we only allow decisions on bidirected instead of directed connections.
We are also able to show that the approximation ratio improves when the minimum
number of parallel connections increases.

The key observation allowing for our approximation ratio is that a basic optimal
solution for LP~\eqref{eq:ilp_mcfs_undirected} will only have few
variables $\lpopt_e$ with~$\lpopt_e \in (0,\mcfalpha\cdot\paredges{e})$ in
comparison to the number of \emph{saturated} edges, i.e., edges used
to their full capacity:

\newcommand{\numsat}{\ensuremath{h}}
\newcommand{\numtiny}{\ensuremath{\ell}}
\newcommand{\satedges}{\ensuremath{H}}
\newcommand{\tinyedges}{\ensuremath{L}}
\begin{lemma}\label{th:lp_bidir_relaxation_bound}
    Let~$\lpopt$ be a basic optimal solution for LP~\eqref{eq:ilp_mcfs_undirected},
    and~$\numsat$ the number of edges~$e\in \edges$ such that $\lpopt_e = \paredges{e}$.
    There exist at most $\numsat$ many edges~$e'$ with $\lpopt_{e'} \in (0,
    \mcfalpha\cdot\paredges{e'})$.
\end{lemma}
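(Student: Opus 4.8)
The plan is to leverage the reformulation established just above: at an optimal solution every capacity constraint~\eqref{eq:ilp_bidir_cap_constraint} is tight, so $\lpopt$ is determined by its aggregate flow through $\lpopt_{\bidir{u,v}} = \flowsum(uv)/\ccap(\bidir{u,v})$, and the statement becomes a structural claim about a vertex of the underlying min-cost multi-commodity flow polytope. I would first partition $\edges$ by the value of $\lpopt_e$ into the zero edges ($\lpopt_e = 0$), the saturated edges $\satedges$ with $\lpopt_e = \paredges{e}$ (so $|\satedges| = \numsat$), and the rest; the set $\tinyedges$ of ``small'' edges with $\lpopt_e \in (0, \mcfalpha\paredges{e})$ sits inside the last group and satisfies $0 < \flowsum(uv) < \mcfalpha\ecap(uv) < \ecap(uv)$, i.e.\ every small edge has two-sided slack in its flow. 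I would then argue by contradiction, assuming $\numtiny > \numsat$, and produce a nonzero flow perturbation $\delta$ with $\lpopt \pm \delta$ both feasible, which is impossible for a basic solution.

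The perturbations I would admit are symmetric multi-commodity circulations supported on the support of the optimal flow, whose induced change $\Delta$ of the aggregate edge flow is frozen to zero on the saturated and zero edges (both already sit at a capacity bound, so feasibility in both directions forces no change there) while being free on the small edges. Such a $\delta$ keeps all of~\eqref{eq:ilp_bidir_flow_conservation}--\eqref{eq:ilp_bidir_cap_constraint} satisfied for a sufficiently small step, and by the averaging/mirroring construction of \Cref{th:bidir_flow_symmetry} it can be taken to respect the symmetry constraint~\eqref{eq:ilp_bidir_flow_symmetry}. The role of the threshold $\mcfalpha\paredges{e}$ is exactly to certify this two-sided slack, and the role of the demand matrix $\mcfalpha\alldemand$ is to force, for each small edge $\bidir{s,t}$, that its own commodity $(s,t)$ (whose demand $\mcfalpha\ecap(st)$ strictly exceeds $\flowsum(st)$) is routed partly off the direct edge, so that the optimal routing admits the rerouting freedom needed to build $\delta$. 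I would then convert the chosen aggregate circulation into an honest feasible perturbation by decomposing it into cycles and rerouting along each cycle using a commodity that already carries flow on its edges, so that decreasing that flow keeps it nonnegative.

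The hard part will be coupling the combinatorial count to exactly $\numsat$: a naive circulation-space (cyclomatic) argument only bounds the total number of interior edges by the number of nodes, not by $\numsat$, so it does not distinguish the small edges $\tinyedges$ from the ``medium'' edges with $\lpopt_e \in [\mcfalpha\paredges{e}, \paredges{e})$. The crux is therefore to use the vertex property together with the specific demand structure $\mcfalpha\alldemand$ to charge each small edge to a distinct saturated edge, i.e.\ to show that whenever $\numtiny > \numsat$ the frozen saturated edges cannot ``pin down'' all small-edge degrees of freedom, leaving a genuine perturbation. Getting this charging to be injective, while simultaneously respecting nonnegativity, flow conservation, and the symmetry constraint~\eqref{eq:ilp_bidir_flow_symmetry} during the lift from the aggregate circulation back to a per-commodity perturbation, is where I expect the main technical effort to lie.
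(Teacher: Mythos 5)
Your proposal sets up the same skeleton as the paper's proof---contradiction with basicness via a two-sided perturbation $\lpopt \pm \delta$, and the correct key observation that for a small edge $e=\bidir{s,t}$ the demand $\mcfalpha\cdot\alldemand(s,t)=\mcfalpha\cdot\ecap(e)$ strictly exceeds the flow that $e$ itself carries, so the commodity $(s,t)$ must also use an \emph{alternative} $s$-$t$-path with positive flow. But the step you flag as ``where I expect the main technical effort to lie'' is precisely the step you have not supplied, and you are aiming at the wrong target there: you ask for an \emph{injective charging} of small edges to distinct saturated edges, which is neither what the paper does nor what is needed. The paper's mechanism is a dimension/pigeonhole argument: fix for each small edge $e\in\tinyedges$ one alternative path $P_e$ carrying positive flow, and form the incidence matrix $M\in\{0,1\}^{\numtiny\times\numsat}$ with $M(e,e'')=1$ iff the saturated edge $e''$ lies on $P_e$. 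If $\numtiny>\numsat$, the rows of $M$ are linearly dependent, giving $q\neq\mathbf{0}$ with $\transpose{q}\cdot M=\mathbf{0}$; the perturbation then shifts, for each small edge $e$, an amount $\varepsilon\cdot q(e)$ of its own commodity between $e$ and $P_e$. The dependence relation makes the \emph{net} flow change on every saturated edge vanish, while all other touched edges carry positive flow strictly below capacity, so a small $\varepsilon$ preserves feasibility in both directions. This is far weaker than injectivity and follows from elementary linear algebra, but without some such mechanism your argument never actually uses the hypothesis $\numtiny>\numsat$, so the proof is incomplete as it stands.

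There is a second gap you do not identify at all: non-degeneracy of the perturbation. The induced change on the $x$-variables at a small edge $e'$ is $\varepsilon q(e')-\varepsilon\sum_{e\in\tinyedges\colon e'\in P_e}q(e)$, because alternative paths of \emph{other} small edges may pass through $e'$; a priori these terms could cancel everywhere, in which case $\lpopt\pm p$ both coincide with $\lpopt$ and no contradiction with basicness follows. The paper closes this with an optimality argument: among small edges with $q(e)\neq 0$, pick $\hat{e}$ maximizing the cost $\frac{1}{\ecap(\hat{e})}$; since by LP optimality an alternative path is never costlier than the direct edge (i.e.\ $\sum_{e''\in P_e}\frac{1}{\ecap(e'')}\leq\frac{1}{\ecap(e)}$), the edge $\hat{e}$ cannot appear on $P_e$ for any other $e$ with $q(e)\neq 0$ (the parallel-edge corner case is handled separately), hence $p(\hat{e})=\varepsilon q(\hat{e})\neq 0$. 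Your ``symmetric circulation'' framing also cannot substitute for this: as you yourself note, a circulation-space argument sees only cyclomatic quantities and is blind to $\numsat$, so both the counting and the non-vanishing step genuinely require the alternative-path construction tied to the demand matrix $\mcfalpha\cdot\alldemand$ and to LP optimality.
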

\begin{proof}
    Let
    $\satedges \be \{e \in \edges \ssep \lpopt_e = \paredges{e}\}$ be the $\numsat$ many edges
    that are saturated by the fractional \ac{MCF}, and $\tinyedges \be \{e \in \edges
    \ssep \lpopt_e \in (0,\mcfalpha \cdot \paredges{e})\}$ with $\numtiny \be
    |\tinyedges|$ the edges that are only used to a fraction less than
    $\mcfalpha\cdot\paredges{e}$;
    in short, edges with \emph{high} and \emph{low} (but non-zero) corresponding~$\lpopt$-values.
    We show that $\numtiny >
    \numsat$ would imply that $\lpopt$ is not basic.

    \newcommand{\tmpvec}{q}
    For every edge~$e=\bidir{s,t} \in \tinyedges$, let $P_{e}$ denote an
    arbitrary \emph{alternative $s$-$t$-path} not using~$e$
    and $\flow_{s,t}(uv) = \flow_{t,s}(vu) > 0$ for all edges $\bidir{u,v} \in
    P_{e}$.
    Such a path must exist to satisfy the demand of
    $\mcfalpha\cdot\alldemand(s,t) = %
    \mcfalpha \cdot \ecap(e)$.
    By optimality of~$\lpopt$, routing a unit of flow over~$P_{e}$ must not be more costly than over $e$ itself,
    i.e., $\sum_{e' \in P_{e}} \frac{1}{\ecap(e')} \leq \frac{1}{\ecap(e)}$.
    We construct a matrix~$M \in
    \{0,1\}^{\numtiny\times\numsat}$ indexed by pairs $(e,e'') \in \tinyedges
    \times \satedges$:
    \begin{align*}
        M(e,e'') =
        \begin{cases*}
            1 & if $e'' \in P_e$,\\
            0 & otherwise.
        \end{cases*}
    \end{align*}
    Since $\numtiny > \numsat$, the $\numtiny$~rows of $M$ must be linearly
    dependent, i.e., there exists a vector $\tmpvec \in
    \Q^{\numtiny}$ with $\tmpvec \neq \mathbf{0}$, such that $\transpose{\tmpvec} \cdot M =
    \mathbf{0}$.

    Based on~$\tmpvec$, we obtain two new feasible solutions by, for each edge~$e \in \tinyedges$, moving a tiny amount of flow from~$e$ to~$P_e$ or vice versa, while adhering to optimality and the edge capacities.
    Formally, for small enough positive $\varepsilon \in \Q$, the
    following vector~$p \in \Q^{|\edges|}$ yields two feasible LP solutions
    $(\lpopt + p)$ and $(\lpopt - p)$:
    \begin{align*}
        p(e') =
        \begin{cases*}
            \varepsilon\cdot\tmpvec(e') -\varepsilon \cdot \sum_{e \in \tinyedges \colon e' \in P_e} \tmpvec(e) & if $e' \in \tinyedges$,\\
            -\varepsilon\cdot \sum_{e \in \tinyedges \colon e' \in P_e}\tmpvec(e) & otherwise. %
        \end{cases*}
    \end{align*}

    Both solutions satisfy all demands and flow conservation
    constraints~\eqref{eq:ilp_bidir_flow_conservation}.
    Consider the capacity constraints~\eqref{eq:ilp_bidir_cap_constraint}:
    By construction, the flow difference on saturated edges is 0
    (for both directions). %
    We only modify flow over a non-saturated edge if it was already non-zero, so a sufficiently small~$\varepsilon$ guarantees a resulting flow between $0$ and the edge's capacity.

    \newcommand{\emax}{\hat{e}}  %
    It remains to show that $p \neq \mathbf{0}$ given that $\tmpvec \neq
    \mathbf{0}$.
    Let $\tinyedges'\subseteq \{e\in \tinyedges : q(e)\neq 0\}$ and $\emax\coloneqq\argmax_{e \in \tinyedges'}\frac{1}{\ecap(e)}$.
    We show that $p(\emax) \neq 0$.
    If~$\emax\in P_e$ for some $e \in \tinyedges'$, then $\sum_{e' \in P_e} \frac{1}{\ecap(e')} \leq
    \frac{1}{\ecap(e)}$ would establish $|P_e| = 1$, and thus require parallel edges.
    Even if we allow parallel edges, shifting small amounts of flow between them yields two alternative solutions contradicting that $\lpopt$ is a basic solution.
    Hence, $\emax$ is only contained in alternative paths~$P_e$ with~$e \in
    \tinyedges\setminus \tinyedges'$, and
\[p(\emax)
        = \varepsilon\cdot\tmpvec(\emax) -\varepsilon \cdot \sum_{e \in
        \tinyedges \colon \emax \in P_e} \tmpvec(e)
        = \varepsilon\cdot\tmpvec(\emax) - 0 \neq 0.\qedhere\]
\end{proof}

Intuitively, by \Cref{th:lp_bidir_relaxation_bound}, there are so few low-value variables
that rounding all of them up increases the objective function only moderately (at most doubling the value contributed by the saturated variables); meanwhile, there can be many more non-saturated variables, but they already hold a significant value in the objective function anyhow, so rounding them up also only contributes a small additional term (depending on $\frac{1}{\mcfalpha}$) to the objective function.
This allows us to obtain the following approximation guarantees, critically deviating from and improving upon~\cite{DBLP:conf/isaac/ChimaniI25}:

\begin{theorem}\label{th:mcfs_bidirected_approx}
    Let~$\lpopt$ be a basic optimal solution for LP~\eqref{eq:ilp_mcfs_undirected}.
    Choosing $\ceil{\lpopt_e}$~many connections into the solution for each
    bidirected edge~$e$ is an $r$-approximation for \prob{TOCA} with
    \begin{align*}
        r \be
        \begin{cases}
            \max(\frac{1}{\mcfalpha\cdot\minpar}, 2) & \text{if } \mcfalpha\cdot\minpar < 1,\\
            1+ \frac{1}{\mcfalpha\cdot\minpar}  & \text{if } \mcfalpha\cdot\minpar \geq 1.
        \end{cases}
    \end{align*}
    where $\minpar \be \min_{e \in \edges}\paredges{e}$.
    Since $r \leq 2$ for $\mcfalpha\cdot\minpar \geq 1$, we can give a
    simplified (but coarse) bound of $r \leq
    \max(\frac{1}{\mcfalpha\cdot\minpar}, 2)$.
\end{theorem}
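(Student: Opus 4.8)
The plan is to show that the rounded vector is a feasible \prob{TOCA} solution and then bound its objective value $\algval = \sum_{e\in\edges}\ceil{\lpopt_e}$ against the LP optimum $\lpval$, which lower-bounds the integer optimum $\ilpval$. Feasibility is the easy half: rounding can only increase capacities, since $\ceil{\lpopt_e}\ge\lpopt_e$ and $\ceil{\lpopt_e}\le\paredges{e}$ (as $\lpopt_e\le\paredges{e}$ and $\paredges{e}\in\Npos$). Hence the flow accompanying $\lpopt$ still satisfies the capacity constraints~\eqref{eq:ilp_bidir_cap_constraint} under the rounded $x$-values, so $\mcfalpha\alldemand$ stays \ac{MCF}-routable, which by \Cref{th:super_t_unsplittable} is exactly what \prob{TOCA}-feasibility demands. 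As LP~\eqref{eq:ilp_mcfs_undirected} relaxes the ILP, $\lpval\le\ilpval$, so it suffices to prove $\algval\le r\cdot\lpval$.

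I would then split $\edges$ by the value of $\lpopt_e$ into four classes: the \emph{zero} edges ($\lpopt_e=0$, irrelevant after rounding), the \emph{low} edges $\tinyedges$ with $\lpopt_e\in(0,\mcfalpha\paredges{e})$, the \emph{medium} edges with $\lpopt_e\in[\mcfalpha\paredges{e},\paredges{e})$, and the \emph{saturated} edges $\satedges$ with $\lpopt_e=\paredges{e}$. Saturated edges are already integral and incur no rounding loss; since each carries $\paredges{e}\ge\minpar$, their LP mass satisfies $\sum_{e\in\satedges}\lpopt_e\ge\numsat\cdot\minpar$. For a medium edge I would bound the per-edge ratio $\ceil{\lpopt_e}/\lpopt_e\le r$ by a short case split: if $\lpopt_e\ge1$ then $\ceil{\lpopt_e}\le\lpopt_e+1\le\lpopt_e\,(1+\tfrac{1}{\max(1,\mcfalpha\minpar)})$, and if $\lpopt_e<1$ then $\ceil{\lpopt_e}=1$ with ratio $1/\lpopt_e\le 1/(\mcfalpha\minpar)$; checking the regime $\mcfalpha\minpar\ge1$ (where the first bound gives exactly $r$) and the regime $\mcfalpha\minpar<1$ (where both bounds stay below $r$) confirms the ratio never exceeds $r$.

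The main obstacle is the low edges, whose per-edge ratio is unbounded because an arbitrarily small $\lpopt_e$ still rounds up to $1$. Here I would not argue per edge but invoke \Cref{th:lp_bidir_relaxation_bound}, which guarantees $\numtiny\le\numsat$; combined with the universal bound $\ceil{\lpopt_e}<\lpopt_e+1$ this gives $\sum_{e\in\tinyedges}\ceil{\lpopt_e}<\sum_{e\in\tinyedges}\lpopt_e+\numsat$, and the surplus of at most $\numsat$ is charged to the saturated mass via $\numsat\le\tfrac{1}{\minpar}\sum_{e\in\satedges}\lpopt_e$. Assembling the three classes then yields
\[
    \algval < \sum_{e\in\tinyedges}\lpopt_e + r\sum_{e\in\text{med}}\lpopt_e + \Bigl(1+\tfrac{1}{\minpar}\Bigr)\sum_{e\in\satedges}\lpopt_e .
\]

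Finally I would compare each coefficient against $r$: the low-edge coefficient is $1\le r$, the medium coefficient equals $r$, and the saturated coefficient satisfies $1+\tfrac1\minpar\le r$ — it equals $1+\tfrac{1}{\mcfalpha\minpar}=r$ via $\mcfalpha<1$ when $\mcfalpha\minpar\ge1$, and is at most $2\le r$ via $\minpar\ge1$ when $\mcfalpha\minpar<1$. This establishes $\algval< r\cdot\lpval\le r\cdot\ilpval$, giving the stated guarantee in both regimes and hence the simplified bound $r\le\max(\tfrac{1}{\mcfalpha\minpar},2)$. I expect the bookkeeping of the medium-edge subcases and the charging of the low edges to the saturated mass to be the delicate steps, whereas feasibility and the final coefficient check are routine.
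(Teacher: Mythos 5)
Your proof is correct, and it rests on the same pillars as the paper's own argument: feasibility because rounding up only adds capacity, the LP optimum $\lpval$ as a lower bound on the integer optimum $\ilpval$, and \Cref{th:lp_bidir_relaxation_bound} to charge the rounded-up low edges against the saturated edges via the bound (number of saturated edges) $\leq \frac{1}{\minpar}\sum_{e \colon \lpopt_e = \paredges{e}} \lpopt_e$. The route through the estimates, however, differs in two ways. First, the paper separates low from medium edges at the single global threshold $\mcfalpha\cdot\minpar$ and runs two separate computations, one per regime (splitting the medium edges further at $1$ when $\mcfalpha\cdot\minpar < 1$); you instead use the per-edge threshold $\mcfalpha\cdot\paredges{e}$, which matches \Cref{th:lp_bidir_relaxation_bound} verbatim, and handle both regimes in a single pass with a per-edge case split ($\lpopt_e \geq 1$ versus $\lpopt_e < 1$) on the medium edges. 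Second, and more substantively, your treatment of the low edges keeps their LP mass with coefficient $1$ and charges only the additive surplus (via $\ceil{\lpopt_e} < \lpopt_e + 1$) to the saturated mass. This is more robust than the identity $\sum_{e \in X}\ceil{\lpopt_e} = |X|$ used in the paper, which requires $\lpopt_e < 1$ on all low edges; that holds in the paper's first case, but in the case $\mcfalpha\cdot\minpar \geq 1$ a low edge may have $\lpopt_e \in \mathopen[1, \mcfalpha\cdot\minpar)$, and your surplus-charging argument covers this situation without any extra work, whereas the paper's second case glosses over it. The only blemish is a wording slip in your final coefficient check: the saturated coefficient $1 + \frac{1}{\minpar}$ does not \emph{equal} $1 + \frac{1}{\mcfalpha\cdot\minpar}$ when $\mcfalpha\cdot\minpar \geq 1$; it is merely at most that value (since $\mcfalpha < 1$), which is all your argument needs.
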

\begin{proof}
    The solution~$\ceil{\lpopt_e}$ is clearly feasible. Let $z$ be its objective value.
    First, we consider case $\mcfalpha\cdot\minpar < 1$:
    Let
    \begin{align*}
        X &\be \{e \in \edges \ssep \lpopt_e \in (0,\mcfalpha\cdot\minpar)\},\\
        Y_1 &\be \{e \in \edges \ssep \lpopt_e \in \mathopen[\mcfalpha\cdot\minpar,1)\},\\
        Y_2 &\be \{e \in \edges \ssep \lpopt_e \in \mathopen[1,\paredges{e})\}, \text{ and} \\
        Z &\be \{e \in \edges \ssep \lpopt_e = \paredges{e}\}.
    \end{align*}
    By \Cref{th:lp_bidir_relaxation_bound}, we know that
    \begin{align*}
        \sum_{e \in X} \ceil{\lpopt_e} = |X| \leq |Z| = \sum_{e\in Z} \frac{\lpopt_e}{\paredges{e}} \leq \frac{1}{\minpar} \cdot \sum_{e\in Z} \lpopt_e.
    \end{align*}
    For the other edges, we observe:
    \begin{align*}
        \sum_{e \in Y_1} \ceil{\lpopt_e} & %
        \leq \frac{1}{\mcfalpha\cdot\minpar}\cdot \sum_{e\in Y_1} \lpopt_e,\\
        \sum_{e \in Y_2} \ceil{\lpopt_e} &\leq 2 \cdot \sum_{e \in Y_2} \lpopt_e,\\
        \sum_{e \in Z} \ceil{\lpopt_e} &= \sum_{e \in Z} \lpopt_e.
    \end{align*}

    Using the minimum fractional objective value $\lpval$ as a lower bound for
    the minimum integral objective value~$\ilpval$, we can then give an upper
    bound for the approximation ratio:
    \begin{align*}
        \frac{\algval}{\ilpval} &\leq \frac{\algval}{\lpval}
        = \frac{\sum_{e \in \edges} \ceil{\lpopt_e}}{\sum_{e \in \edges} \lpopt_e}
        = \frac{\sum_{e \in X \union Y_1 \union Y_2 \union Z} \ceil{\lpopt_e}}{\sum_{e \in \edges} \lpopt_e}\\
        &\leq \frac{\frac{1}{\mcfalpha\cdot\minpar} \sum_{e \in Y_1} {\lpopt_e} + 2 \sum_{e \in Y_2} {\lpopt_e} + (1+ \frac{1}{\minpar}) \sum_{e \in Z} {\lpopt_e}}{\sum_{e \in \edges} \lpopt_e}\\
        &\leq \frac{\max(\frac{1}{\mcfalpha\cdot\minpar},2) \cdot \sum_{e \in \edges}
        \lpopt_e}{\sum_{e \in \edges} \lpopt_e} =
        \max\left(\frac{1}{\mcfalpha\cdot\minpar},2\right)
    \end{align*}
    as $\minpar\in\N_{\geq 1}$ and thus $(1+ \frac{1}{\minpar}) \leq 2$.

    Now consider the case $\mcfalpha\cdot\minpar \geq 1$ and note that $\edges = X
    \union Y \union Z$ with $Y \be \{e \in \edges \ssep \lpopt_e \in
    \mathopen[\mcfalpha\cdot\minpar,\paredges{e})\}$, and
    \begin{align*}
        \sum_{e \in Y} \ceil{\lpopt_e} &\leq \sum_{e \in Y} (\lpopt_e + 1) \leq (1 + \frac{1}{\mcfalpha\cdot\minpar}) \cdot \sum_{e \in Y} \lpopt_e .
    \end{align*}
    This upper bound, in conjunction with the above bounds w.r.t.\ $X$ and $Z$, yields the approximation
    ratio
    \begin{align*}
        \frac{\algval}{\ilpval} &\leq \frac{\algval}{\lpval}
        = \frac{\sum_{e \in \edges} \ceil{\lpopt_e}}{\sum_{e \in \edges} \lpopt_e}
        = \frac{\sum_{e \in X \union Y \union Z} \ceil{\lpopt_e}}{\sum_{e \in \edges} \lpopt_e}\\
        &\leq \frac{(1+ \frac{1}{\mcfalpha\cdot\minpar}) \sum_{e \in Y} \lpopt_e + (1+ \frac{1}{\minpar}) \sum_{e \in Z} {\lpopt_e}}{\sum_{e \in \edges} \lpopt_e}\\
        &\leq \frac{(1+ \frac{1}{\mcfalpha\cdot\minpar}) \cdot \sum_{e \in \edges}
        \lpopt_e}{\sum_{e \in \edges} \lpopt_e} = 1+ \frac{1}{\mcfalpha\cdot\minpar}.\qedhere
    \end{align*}
\end{proof}

\subsection{Approximation for Uniform Capacities}
In some real-world backbone networks, most links have the same capacity. %
If \emph{all} have the same capacity, \cite[Corollary 7]{DBLP:conf/isaac/ChimaniI25} argues that
simply choosing all arcs into the solution is already a
$\frac{1}{\mcfalpha}$-approximation for \prob{MMCFS}; this proof can be extended
to show that choosing all edges into the solution is an approximation
with the same guarantee for \prob{TOCA}.
However, we show that our rounding strategy improves upon this ratio by
exploiting the number of parallel connections per edge.

\begin{theorem}\label{th:mcfs_bidirected_approx_uniform}
    Let $G=(V,A)$ be a bidirected graph with \emph{uniform} arc capacities~$\ecap$,
    and~$\mcfalpha = \frac{p}{q}$ a retention ratio.
    Choosing exactly~$\ceil{\mcfalpha \cdot \paredges{e}}$ many
    connections for each bidirected edge~$e \in \edges$ into the solution is a
    linear-time
    $\min \{ \frac{1}{\mcfalpha}, (1 + \frac{(q-1)}{p \cdot
    \meanpar})\}$-approximation for \prob{TOCA}, where $\meanpar$ denotes the
    average number of connections per edge.
\end{theorem}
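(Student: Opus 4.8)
The plan is to sandwich the solution value $z = \sum_{e\in\edges}\lceil\mcfalpha\paredges{e}\rceil$ and the optimum $\ilpval$ between a single quantity, namely $\mcfalpha\sum_{e\in\edges}\paredges{e} = \mcfalpha\,|\edges|\,\meanpar$. First I would dispose of feasibility and running time: since $\ecap'(e) = \lceil\mcfalpha\paredges{e}\rceil\cdot\ccap(e) \geq \mcfalpha\paredges{e}\ccap(e) = \mcfalpha\ecap(e)$, the scaled worst-case matrix $\mcfalpha\alldemand$ is routed by sending each commodity's demand $\mcfalpha\ecap(st)$ directly along its own arc $st$, which stays within $\ecap'$. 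By \Cref{th:super_t_unsplittable} this single routing already certifies feasibility for \prob{TOCA}, and computing one ceiling per edge is clearly linear.

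Next I would record two elementary upper bounds on $z$. Because $\mcfalpha < 1$ forces $\lceil\mcfalpha\paredges{e}\rceil \leq \paredges{e}$, we get $z \leq \sum_{e}\paredges{e}$. Writing $\mcfalpha = \frac{p}{q}$ and using $\lceil p\paredges{e}/q\rceil \leq (p\paredges{e}+q-1)/q = \mcfalpha\paredges{e}+\frac{q-1}{q}$, we also get $z \leq \mcfalpha\sum_{e}\paredges{e} + |\edges|\,\frac{q-1}{q}$, where $\sum_{e}\paredges{e} = |\edges|\,\meanpar$.

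The crux, and the step I expect to be the main obstacle, is the matching lower bound $\ilpval \geq \mcfalpha\sum_{e}\paredges{e}$. I would argue it through the relaxation, as $\ilpval \geq \lpval$. Here uniformity is essential: because all capacities are uniform, every edge carries the same per-unit routing cost $\cost(e)$, so the MCF-LP that the paper shows to be equivalent to LP~\eqref{eq:ilp_mcfs_undirected} merely minimizes total cost-weighted flow. Using the constraint-derived bound $x_e \geq \frac{\flowsum(uv)+\flowsum(vu)}{2\ccap(e)}$ together with flow symmetry, summation over all edges converts $\sum_e x_e$ into a constant multiple of the total routed flow; this total is at least the total demand $\sum_{a\in A}\mcfalpha\ecap(a)$, since every unit of demand traverses at least one arc. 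Rewriting the total demand via $\ecap(a)=\paredges{}\cdot\ccap$ yields exactly $\mcfalpha\sum_{e}\paredges{e}$, and direct routing attains it, so the bound is tight. The delicate point is precisely that this lower bound hinges on all edge costs being equal: without uniformity the LP could divert the demand of high-$\paredges{}$ links onto cheaper detours and drop below $\mcfalpha\sum_{e}\paredges{e}$, which is why the claim is stated only for the uniform regime.

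Finally I would combine the pieces. Dividing the first upper bound by the lower bound gives $z/\ilpval \leq 1/\mcfalpha$, and dividing the second by the lower bound gives $z/\ilpval \leq 1 + \frac{(q-1)/q}{\mcfalpha\meanpar} = 1 + \frac{q-1}{p\,\meanpar}$; taking the smaller of the two establishes the claimed ratio.
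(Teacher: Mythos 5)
Your proposal follows essentially the same route as the paper's own proof: feasibility of the rounded solution via routing every demand of $\mcfalpha\cdot\alldemand$ directly over its own arc, the lower bound $\ilpval \geq \lpval = \mcfalpha\sum_{e\in\edges}\paredges{e}$, and the two branches of the ratio from $\ceil{\mcfalpha\paredges{e}} \leq \paredges{e}$ and $\ceil{\mcfalpha\paredges{e}} \leq \mcfalpha\paredges{e} + \frac{q-1}{q}$. The differences are minor: the paper justifies the lower bound by claiming the LP optimum \emph{is} the direct routing, i.e.\ $\lpopt_e = \mcfalpha\cdot\paredges{e}$ (a detour over two or more edges is never cheaper when costs are equal), whereas you bound \emph{every} feasible LP solution from below by a total-demand count; and the paper only derives the $1+\frac{q-1}{p\cdot\meanpar}$ branch inline, inheriting the $\frac{1}{\mcfalpha}$ branch from its extension of the cited Corollary~7, while you derive both branches explicitly. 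Your feasibility argument via \Cref{th:super_t_unsplittable} is also slightly cleaner, as it does not rest on the LP characterization at all.

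There is, however, a gap in the key step --- and it is worth flagging even though the paper's own proof contains exactly the same one. You assert that uniform capacities imply that \enquote{every edge carries the same per-unit routing cost}, but $\cost(e) = \frac{1}{\ccap(e)} = \frac{\paredges{e}}{\ecap(e)}$: with uniform \emph{arc} capacities $\ecap$ and non-uniform $\paredges{e}$ (which the statement permits, otherwise quoting $\meanpar$ would be pointless), the costs are not equal, and an edge with many connections is expensive per unit of flow. Concretely, take a bidirected triangle on $\{s,v,t\}$ with $\ecap \equiv 1$, $\paredges{\bidir{s,t}} = 10$, $\paredges{\bidir{s,v}} = \paredges{\bidir{v,t}} = 1$, and $\mcfalpha = \frac{1}{2}$: detouring both commodities of $\bidir{s,t}$ via $v$ is feasible (each of the four remaining arcs then carries total flow $1$) and gives LP and ILP value $2$, while $\mcfalpha\sum_{e}\paredges{e} = 6$ and the rounded solution has value $5+1+1 = 7$. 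So under the literal hypothesis both your lower bound and the claimed ratio $\min\{2,\, 1+\frac{1}{4}\}$ fail, exactly through the mechanism you yourself describe as the \enquote{delicate point} --- the LP diverting demand of high-$\paredges{}$ links onto cheaper detours --- except that this mechanism is not excluded by uniform $\ecap$ alone. Both your proof and the paper's become correct under the stronger reading that the per-connection capacities $\ccap$ are uniform (equivalently, that $\paredges{}$ is uniform whenever $\ecap$ is): then costs are genuinely equal, your total-demand argument and the paper's direct-routing characterization both go through verbatim. In short, your proof is faithful to the paper's, including its hidden assumption.
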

\begin{proof}
An optimal fractional solution~$\{\lpopt_e \ssep e \in \edges\}$ for
LP~\eqref{eq:ilp_mcfs_undirected} will always set $\lpopt_e = \mcfalpha \cdot
\paredges{e}$ for each edge~$e \in \edges$:
all edges have the same capacity, so routing a
demand $\mcfalpha\cdot\alldemand(e)$ over an alternative path (with two or more
edges) would always incur a higher cost than routing it over~$e$ directly.

The solution~$\ceil{\mcfalpha \cdot \paredges{e}}$ is thus feasible and
can be computed in $\bigO(m)$~time.
We obtain the ratio of the algorithm's objective value~$\algval$ and the optimal ILP's
objective value~$\ilpval$ by using the optimal LP's objective
value~$\lpval$ as a lower bound for~$\ilpval$:
\begin{align*}
    \frac{\algval}{\ilpval} &\leq \frac{\algval}{\lpval} = \frac{\sum_{e \in \edges} \ceil{\lpopt_e}}{\sum_{e \in \edges} \lpopt_e}
                    = 1 + \frac{\sum_{e \in \edges} (\ceil{\lpopt_e} - \lpopt_e)}{\sum_{e \in \edges} \lpopt_e}\\
                    &= 1 + \frac{\sum_{e \in \edges} (\ceil{\frac{p}{q} \cdot \paredges{e}} - \frac{p}{q} \cdot \paredges{e})}{\sum_{e \in \edges} (\frac{p}{q} \cdot \paredges{e})}\\
                    &\leq  1 + \frac{\sum_{e \in \edges} (\frac{q-1}{q})}{\sum_{e \in \edges} (\frac{p}{q} \cdot \paredges{e})}
                    = 1 + \frac{|\edges|\cdot(q-1)}{p \cdot |\edges|\cdot\meanpar}.\qedhere
\end{align*}
\end{proof}

\subsection{Dynamic Reconfigurations}

Rather than having only two network topologies, one for the peak times and
one for the low traffic times, one might want to compute several other
topologies for the times in between.
Such configurations for different levels of traffic, or even completely dynamic
reconfigurations, would allow for more fine-grained traffic control (at the
price of additional overhead during actual deployment).

An obvious approach to find reduced topologies for different levels of traffic
would be to compute \prob{TOCA} solutions for the same original network but
different values for~$\mcfalpha$.
However, we note that an optimal solution for a \prob{TOCA}
instance~$(G,\ecap,\mcfalpha_1)$ does not necessarily contain any optimal
solution for the instance~$(G,\ecap,\mcfalpha_2)$ with~$\mcfalpha_2 <
\mcfalpha_1$, even when all arc capacities are 1.
This is true both in the cases of \prob{TOCA} on bidirected graphs with one
connection per edge and \prob{MMCFS} on general directed graphs, as shown by
\Cref{fig:mcfs_monotonicty}.
However, to allow for dynamic reconfigurations in practice,
we may first optimize for~$\mcfalpha_1$, and then
reduce this solution further to obtain a solution
for~$\mcfalpha_2 < \mcfalpha_1$.
Alternatively, we may extend a solution w.r.t.\ $\mcfalpha_2$ to one for~$\mcfalpha_1$.

\begin{figure}[htpb]
    \centering
    \begin{tikzpicture}[scale=1.0]
        \node[label={-90:$s$},dot] (s) at (0,0) {};
        \node[dot] (a) at (1,0) {};
        \node[label={-90:$v$},dot] (b) at (2,0) {};
        \node[dot] (c) at (3,0) {};
        \node[label={-90:$t$},dot] (t) at (4,0) {};

        \begin{scope}[every edge/.style={biedge,ultra thick}]
            \path (s) edge (a);
            \path (a) edge (b);
            \path (b) edge (c);
            \path (c) edge (t);
        \end{scope}

        \path (s) edge[biedge,bend left=85,draw=lightgray!80,very thick] (t);
        \begin{scope}[every edge/.style={biedge, very thick, dashed}]
            \path (s) edge[bend left=45] (b);
            \path (b) edge[bend left=45] (t);
        \end{scope}
    \end{tikzpicture}
    \caption{A network topology showcasing the non-monotonicity of optimal \prob{TOCA}
        solutions w.r.t.\ $\mcfalpha$. All bidirected edges have capacity 1
        and consist of one connection.
        For~$\mcfalpha_1 \in (\frac{1}{2}, \frac{2}{3}\mathclose]$, the unique optimal
        solution consists of all black edges (both solid and dashed).
        For~$\mcfalpha_2 = \frac{1}{2}$, the unique optimal solution consists of all solid edges (both black
        and gray).
        By replacing all edges with arcs directed from left to right we obtain
        an analogous instance that shows the non-monotonicity of \prob{MMCFS} solutions.
    }
    \label{fig:mcfs_monotonicty}
\end{figure}
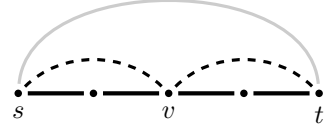

\section{Post-processing}\label{sec:post_processing}

\newcommand\alground{\emph{RND}}
\newcommand\algup{\emph{1-UP}}
\newcommand\algdown{\emph{1-DWN}}
Consider the rounding scheme for \prob{TOCA} presented in
\Cref{sec:approximation}, which amounts to
rounding up a basic optimal solution for
LP~\eqref{eq:ilp_mcfs_undirected}.
We call this scheme \alground{}, and present two
iterative
post-processing heuristics, \algdown{}~(\Cref{alg:mcfs_heur_removal}) and
\algup{}~(\Cref{alg:mcfs_heur_addition}), that further improve \alground{} solutions.
We also tested CPLEX's built-in
feasibility pump, a well-established and thoroughly tested method that
searches for a feasible solution with good objective
value~\cite{DBLP:journals/ejco/Berthold0S19}; it barely
improved \alground{} solutions and thus we do not consider it further.

Before executing the post-processing routines of \algdown{} and \algup{},
we always compute a basic optimal solution~$\lpopt$ for
LP~\eqref{eq:ilp_mcfs_undirected} and then restrict all variables~$x_e$ to
the interval~$\big[\floor{\lpopt_e}, \ceil{\lpopt_e}\big]$.
Not only does this already fix all variables with an integer value; the
constraints~$\{x_{e} \leq \ceil{\lpopt_{e}} \ssep e\in \edges\}$ in particular ensure
that we generate a subset of an \alground{} solution and thus maintain the approximation guarantees given by
\Cref{th:mcfs_bidirected_approx,th:mcfs_bidirected_approx_uniform}.
\algdown{} and \algup{} then execute the following loop until all variables have
integral values:
In each iteration we round one of the variables~$x_e$, $e\in \edges$, with the
smallest rounding difference either down (for \algdown{}) or up (for \algup{}).
By computing LP~\eqref{eq:ilp_mcfs_undirected} again under this restriction, we
obtain a new set of values for our variables.
However, for \algdown{}, this new LP may be infeasible; in this case we dismiss
our previous change to the LP and round up~$x_e$ instead, allowing us to
recompute the LP to obtain new variable values.
Both heuristics will terminate after at most $|\edges|$~iterations,
since in each iteration, at least one additional variable---the one we rounded up or
down---is fixed to an integer value.

\newcommand{\LP}{\ensuremath{\mathcal{L}}}
\begin{algorithm}[tb]
    \caption{Heuristic \algdown{}.}
    \label{alg:mcfs_heur_removal}
    \begin{algorithmic}[1]
        \Input{bidirected graph $G = (V,A)$ with $\ecap$, $\paredges{}$; $\mcfalpha \in (0,1)$.}
        \Let{$\LP$}{LP~\eqref{eq:ilp_mcfs_undirected} for $(G,\ecap,\paredges{},\mcfalpha)$}
        \Let{$\lpopt$}{basic optimal solution of $\LP$}
        \State add constraints $\{\floor{\lpopt_{e}} \leq x_{e} \leq \ceil{\lpopt_{e}} \ssep e\in \edges\}$ to $\LP$
        \While{$\lpopt$ is not integral}
            \newcommand{\tmpvar}{\tilde{x}_e}
            \Let{$e$}{$\argmin_{e' \in \edges \colon \lpopt_{e'} \notin \N} (\lpopt_{e'} - \floor{\lpopt_{e'}})$}
            \Let{$\tmpvar$}{$\lpopt_e$}
            \State add constraint $x_e = \floor{\tmpvar}$ to \LP
            \Let{$\lpopt$}{optimal solution of \LP}
            \If{$\lpopt$ is infeasible}
                \State replace constraint $x_e = \floor{\tmpvar}$ by $x_e = \ceil{\tmpvar}$ in $\LP$
                \Let{$\lpopt$}{optimal solution of $\LP$}
            \EndIf
        \EndWhile
        \Return $\lpopt$
    \end{algorithmic}
\end{algorithm}

\begin{algorithm}[tb]
    \caption{Heuristic \algup{}.}
    \label{alg:mcfs_heur_addition}
    \begin{algorithmic}[1]
        \Input{bidirected graph $G = (V,A)$ with $\ecap$, $\paredges{}$; $\mcfalpha \in (0,1)$.}
        \Let{$\LP$}{LP~\eqref{eq:ilp_mcfs_undirected} for $(G,\ecap,\paredges{},\mcfalpha)$}
        \Let{$\lpopt$}{basic optimal solution of $\LP$}
        \State add constraints $\{\floor{\lpopt_{e}} \leq x_{e} \leq \ceil{\lpopt_{e}} \ssep e\in \edges\}$ to $\LP$
        \While{$\lpopt$ is not integral}
        \Let{$e$}{$\argmin_{e' \in \edges \colon \lpopt_{e'} \notin \N} (\ceil{\lpopt_{e'}} - \lpopt_{e'})$}
            \State add constraint $x_e = \ceil{\lpopt_e}$ to \LP
            \Let{$\lpopt$}{optimal solution of \LP}
        \EndWhile
        \Return $\lpopt$
    \end{algorithmic}
\end{algorithm}

\section{Experimental Setup}\label{sec:evaluation_setup}

\begin{figure*}
\centering
\begin{subfigure}{.32\textwidth}
  \centering
  \includegraphics[width=\linewidth]{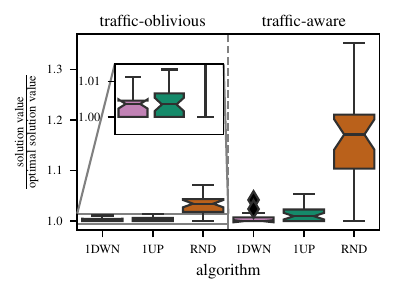}
  \caption{Solution value comparison between the rounding algorithms and the ILP.}
  \label{fig:solutionValue}
\end{subfigure}%
\hfill
\begin{subfigure}{.32\textwidth}
  \centering
  \includegraphics[width=\linewidth]{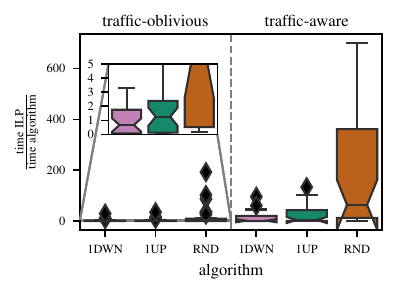}
  \caption{Computation time comparison between the rounding algorithms and the ILP.}
  \label{fig:Time}
\end{subfigure}%
\hfill
\begin{subfigure}{.32\textwidth}
  \centering
  \includegraphics[width=\linewidth]{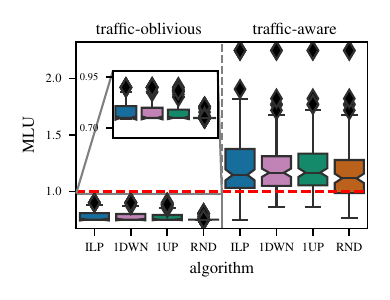}
  \caption{\acf{MLU} of all approaches.}
  \label{fig:MLU}
\end{subfigure}
\caption{Results of the experimental evaluation}
\label{fig:Results}
\end{figure*}

We provide an overview of our experimental setup, including the evaluated algorithms, their parametrization, and the dataset.
All experiments were conducted on a system equipped with two AMD EPYC 7452 CPUs and 512~GB of RAM, running 64-bit Ubuntu 20.04.1.
All LPs and ILPs were solved using CPLEX~\cite{cplex}.

\subsection{Algorithms and Parametrization}
We evaluate four algorithms:
the LP-rounding scheme \alground{} given by \Cref{th:mcfs_bidirected_approx}
and its two post-processing heuristics \algup{} and \algdown{} (presented in \Cref{sec:post_processing}), as well as an exact algorithm that simply solves ILP~\eqref{eq:ilp_mcfs_undirected} optimally and thus serves as a baseline.
Each of these algorithms has two variants:
The \textit{traffic-oblivious} variant only considers the
scaled-down traffic matrix~$\mcfalpha\cdot\alldemand$ where~$\alldemand$ is given by \Cref{eq:super_t_arcs} in \Cref{subsec:Problem}.
In contrast, the \emph{traffic-aware} variant optimizes the network w.r.t.\ to a scaled-down traffic matrix~$\mcfalpha\cdot\demand$ where~$\demand$ is one specific %
traffic matrix provided by the dataset; here, the approximation guarantees of \Cref{th:mcfs_bidirected_approx,th:mcfs_bidirected_approx_uniform} do not hold.

Evaluating both the traffic-oblivious and traffic-aware variants
allows us to compare how many active connections the respective output topologies contain, and how well they are able to accommodate differently structured traffic during low-traffic periods (such as the night).
To give a precise measure of the latter performance criterion, we
take traffic matrices from our dataset, scale them down by $\mcfalpha$, route them through the output topologies using the 2-\ac{SR}
approach proposed by Bhatia et al.~\cite{DBLP:conf/infocom/BhatiaHKL15}, and compute the resulting \ac{MLU}.

Throughout the whole evaluation, both for the algorithms and the \ac{MLU} computation, we use the retention ratio~$\mcfalpha= 0.5$.
This scaling factor is based on traffic measurements from a major European Tier-1 \ac{ISP}:
as shown in \Cref{FIG:2022-traffic}, network utilization typically drops below 50\% between 1:00 AM and 9:00 AM compared to peak hours in the evening.
We also performed experiments for~$\mcfalpha= 0.3$ and~$\mcfalpha= 0.7$---however, as the corresponding results showed similar patterns to those for~$\mcfalpha= 0.5$, we omit them here due to space limitations.

\subsection{Dataset}
We use the Repetita dataset~\cite{repetita}, a well-known and commonly used \ac{TE} benchmarking dataset.
However, many of the Repetita topologies---such as the original ARPANET---are relatively small, and do not reflect the structure of modern backbone networks.
Thus, we only run our experiments on topologies of the Topology Zoo~\cite{topo-zoo} with more than 60 nodes.

In all topologies of the dataset, each pair of connected nodes is only connected via a single (bidirected) edge.
However, in real backbone networks, it is common to have multiple parallel connections between two routers.
We thus assume that there are $\paredges{e} = 5$ connections for all edges~$e$ in all topologies of the dataset, a value that is
based on the topology of a major European Tier-1 \ac{ISP}.
This allows a granular deactivation of individual connections while avoiding hard disconnects.

The dataset also includes four different traffic matrices per topology, each designed to reflect high-utilization scenarios.
These traffic matrices are used both for the traffic-aware algorithm variants and for the \ac{MLU} computations.

\section{Results}\label{sec:evaluation_results}

We discuss the results of our evaluation, comparing the different algorithms w.r.t.\ three measures: the number of active connections in the output topologies, the computation times, and
the \ac{MLU} yielded by 2-\ac{SR} in the output topologies.

\subsection{Number of Active Connections (Solution Value)}
First we examine the solution values, i.e., the number of active connections in the output topologies, of our rounding algorithms \alground{}, \algdown{}, and \algup{}.
Figure~\ref{fig:solutionValue} compares these to the optimal solution obtained from ILP~\eqref{eq:ilp_mcfs_undirected}.
By plugging the values~$\mcfalpha = 0.5$ and~$\minpar = 5$ (see \Cref{sec:evaluation_setup}) into the approximation guarantee of \Cref{th:mcfs_bidirected_approx}, we know that all three rounding algorithms
are guaranteed to deviate from the optimum by a factor of at most $1.4$ in the traffic-oblivious case.
However, the algorithms not only adhere to this theoretical guarantee, but outperform it by always staying below a factor of $1.1$.
In fact, both \algdown{} and \algup{} yield results that are nearly optimal.
In contrast, \alground{} without post-processing performs significantly worse: the number of active connections is up to 10\% higher in the traffic-oblivious case and up to 30\% higher in the traffic-aware case.

These results highlight two key insights:
First, our sophisticated post-processing heuristics clearly pay off, especially in traffic-aware scenarios.
Second, on instances that resemble real-world networks, they not only adhere to the approximation guarantee but return near-optimal results.
Networks where the guarantee is met exactly typically do not occur in practice since they are already difficult to construct theoretically~\cite{DBLP:conf/isaac/ChimaniI25}.

Further, the traffic-aware approaches are able to deactivate up to 30\% more connections compared to the respective traffic-oblivious variants.
However, they have the disadvantage of only guaranteeing feasibility for one specific traffic matrix.

\subsection{Computation Time}
Figure~\ref{fig:Time} compares the computation time of the rounding algorithms with those of the optimal ILP approach, the latter requiring on average 14 (43) minutes for the traffic-oblivious (traffic-aware, resp.) case. By far the fastest algorithm is \alground{} without post-processing. In the traffic-oblivious case, it is on average roughly 10 times faster than the ILP. This effect is even more evident in the traffic-aware case, where \alground{} is roughly 150 times faster.
It is not surprising that the simple rounding strategy outperforms all other algorithms in terms of computation time---however, it is remarkable that it does so while still having the approximation guarantee of \Cref{th:mcfs_bidirected_approx}.

Evaluating \algdown{} and \algup{} requires a more nuanced view: On one hand, if the computation of the ILP takes less than ten minutes, usually the ILP is faster than both post-processing heuristics.
After all, each iteration of the post-processing involves solving an LP (or even two LPs for some iterations of \algdown{}), which can be quite time consuming.
On the other hand, if the ILP computation exceeds ten minutes, it usually also exceeds an hour, and both post-processing heuristics are significantly faster in comparison.
We thus recommend using \algdown{} or \algup{} if the optimal solution cannot be computed within ten minutes.
Since both of these algorithms give very similar results, we do not have a clear preference for either of them, but \algup{} is sometimes slightly faster.

\subsection{Maximum Link Utilization (MLU)}

Recall that the dataset contains a set of four traffic matrices for each topology.
For the traffic-oblivious algorithm variants, \Cref{fig:MLU} shows the \ac{MLU} of all four traffic matrices.
In contrast, the traffic-aware variants optimize their solution w.r.t.\ one specific traffic matrix out of this set.
For these variants, the plot shows the \ac{MLU} for the three remaining traffic matrices.

All traffic-oblivious algorithm variants successfully handle the evaluated traffic matrices, never exceeding an \ac{MLU} of~1. Among these approaches, the ILP yields the highest \ac{MLU}, with all results clustering around an \ac{MLU} of roughly 0.9.
This is not surprising as the traffic matrices in the Repetita dataset are designed to be challenging in terms of link utilization:
even the best possible routing cannot achieve an \ac{MLU} significantly below 0.9. Therefore, we would expect a similar \ac{MLU} in our low-traffic scenario where both the topology and the traffic matrix have been reduced proportionally.

Further, the \ac{MLU} increases with a lower number of active connections:
The simple rounding algorithm \alground{} without post-processing performs better in terms of \ac{MLU} than the post-processing heuristics \algdown{} and \algdown{}.
In turn, these perform slightly better than the ILP.
This is because a higher number of active connections allows the load to be distributed more evenly, leading to marginally lower \ac{MLU} values.

The results of our \ac{MLU} experiment for the traffic-aware algorithm variants reveal additional critical insights.
None of the traffic-aware solutions is able to reliably handle traffic matrices other than the specific one they were optimized for---even though all matrices have the same total volume and only differ w.r.t.\ how the traffic is distributed among source-destination pairs.
This sensitivity illustrates a key limitation: traffic-aware solutions offer better performance under known conditions but lack robustness when faced with variations in traffic.
Such variations can occur frequently in real-world networks, for example due to routing changes triggered by external failures or altered peering agreements.
Hence, it is of great practical importance to evaluate solutions not only under optimal conditions but also across a diverse set of traffic scenarios.
Moreover, the shortcomings of the traffic-aware algorithm variants emphasize the practical value of the traffic-oblivious approaches.

\section{Conclusion and Future Work}
We presented a novel approach to Green \acl{TE}: instead of using explicit traffic matrices and frequent reconfigurations, we proposed a traffic-oblivious approach that guarantees routability for any down-scaled traffic matrix. We defined the NP-hard problem of activating as few connections as possible while still allowing a \acl{MCF} routing of any such matrix. Further, an approximation algorithm and two post-processing heuristics to improve solution quality were developed.
Our evaluation shows that these algorithms compute near-optimal solutions within low computation times.

In this work, we have laid strong theoretical and practical groundwork for a traffic-oblivious Green \acl{TE} approach. Nevertheless, several challenges remain: First, our approach currently abstracts away from specific hardware. Adapting it to real transponder and router hardware, including their exact power models and routing constraints, will require further theoretical considerations---for instance, introducing a weighted model to reflect the actual energy savings potential of each connection.
Moreover, adapting our methods to \acl{SR} requires additional research. Due to the limited number of segments, our problem reformulation using a single traffic matrix~$\alldemand$ is no longer feasible. Furthermore, the unsplittable flows of \acl{SR} would introduce new integer path variables, making it unclear how to efficiently select single paths.

\newpage
\bibliographystyle{IEEEtranS}
\bibliography{main}

\end{document}